\newcommand{\abs}[1]{\left\vert #1 \right\vert}
\newcommand{\pa}[1]{\left( #1 \right)}
\newcommand{\be}{\begin{equation}}
\newcommand{\ee}{\end{equation}}
\newtheorem{thm}{Theorem}[section]
\newtheorem{lem}[thm]{Lemma}
\newtheorem{rem}[thm]{Remark}
\begin{document}

\title{A spatial multiscale mathematical model of \textit{Plasmodium vivax} transmission}

\author[1]{Shoshana Elgart}
\author[2]{Mark B. Flegg}
\author[3]{Somya Mehra}
\author[3]{Jennifer A. Flegg}
\affil[1]{Laurel Springs School, Ojai, California, United States}
\affil[2]{School of Mathematics, Monash University, Melbourne Australia}
\affil[3]{School of Mathematics and Statistics, The University of Melbourne, Parkville, Australia}
\date{}                     
\setcounter{Maxaffil}{0}
\renewcommand\Affilfont{\itshape\small}
\maketitle
\begin{abstract}
The epidemiological behavior of \textit{Plasmodium vivax} malaria occurs across spatial scales including within-host, population, and metapopulation levels. On the within-host scale, \textit{P. vivax} sporozoites inoculated in a host may form latent hypnozoites, the activation of which drives secondary infections and accounts for a large proportion of \textit{P. vivax} illness; on the metapopulation level, the coupled human-vector dynamics characteristic of the population level are further complicated by the migration of human populations across patches with different malaria forces of (re-)infection. To explore the interplay of all three scales in a single two-patch model of \textit{Plasmodium vivax} dynamics, we construct and study a system of eight integro-differential equations with periodic forcing (arising from the single-frequency sinusoidal movement of a human sub-population). Under the numerically-informed ansatz that the limiting solutions to the system are closely bounded by sinusoidal ones for certain regions of parameter space, we derive a single nonlinear equation from which all approximate limiting solutions may be drawn, and devise necessary and sufficient conditions for the equation to have only a disease-free solution. Our results illustrate the impact of movement on \textit{P. vivax} transmission and suggest a need to focus vector control efforts on forest mosquito populations. The three-scale model introduced here provides a more comprehensive framework for studying the clinical, behavioral, and geographical factors underlying \textit{P. vivax} malaria endemicity.
\end{abstract}
\section{Introduction}\label{Intro}
Malaria is one of the most significant sources of morbidity and mortality across the world, resulting in nearly 250 million cases and accounting for over 60,000 deaths in 2021 alone \cite{world2022world}. The disease is particularly dangerous to young children (likely accounting for over $7\%$ of all deaths in children under five) and to persons with compromised immune systems \cite{laishram2012complexities, monroe2022reflections}. 

Global efforts towards malaria elimination have refocused attention on disease caused by \textit{Plasmodium vivax}, a malaria parasite responsible for a significant fraction of recurring blood-stage infections. \textit{P. vivax} infections are characterized by the presence of hypnozoites -- latent forms of the \textit{Plasmodium} sporozoites inoculated by the bites of infectious \textit{Anopheles} mosquitoes -- in host hepatocytes. Certain hypnozoites may \textit{activate} after initial primary infections, causing malaria relapses in which the host may both experience blood-stage infection symptoms and is liable to transmit malaria to additional mosquitoes, thus significantly increasing the number of secondary cases that a primary case may give rise to \cite{white2014modelling}. Current anti-hypnozoital pharmaceutical interventions (chiefly the 8-aminoquinoline drugs primaquine and tafenoquine) are limited in their efficacy due to the risk for erythrocyte rupture in individuals with  G6PD deficiency, emerging parasite resistance, and lack of funding for treatment availability in the private sector \cite{collins1996primaquine, frank2005diagnosis,actwatch2017private}.

In the regions of Southeast Asia where \textit{P. vivax} are highly prevalent, even small-scale spatial behavior is known to play a crucial role in disease transmission. In Cambodia, for example, where over $90\%$ of malaria incidence involves \textit{P. vivax} parasites, the majority of remaining endemic districts are isolated from urban areas and situated in proximity to tropical forests \cite{durnez2013outdoor}. Owing to high humidity and seasonal floods, these forests are typically characterized as zones of severe prevalence among \textit{Anopheles} populations, particularly when compared with nearby villages, where mosquitoes are targeted with long-lasting insecticidal nets, residual spraying, and other interventions \cite{jongdeepaisal2021acceptability}. Mobile groups (often termed ``forest-goers") who migrate between lower-transmission villages and high-transmission forests are considered to be at highest risk for contracting -- and spreading -- malaria, with the residual endemicity in forest-adjacent villages often attributed to the movement of forest-goers \cite{chhim2021malaria,durnez2013outdoor, phok2022behavioural}.  
Similar patterns of movement-based transmission have been observed in neighboring Thailand, Laos, and Vietnam \cite{phok2022behavioural}. 

A number of spatial or \textit{P. vivax} within-host mathematical models have been posed to describe malaria epidemiological dynamics and predict transmission. Since the model we present in this paper unifies the spatial and within-host representations, we discuss previous work most relevant to ours from both categories below. 

We begin with spatial malaria models (e.g., \cite{anicta2019regional, auger2008ross, gao2012multipatch, moukam2018spatial, prosper2012assessing, rodriguez2001models}) which either represent a vector-borne disease in general or specifically focuses on \textit{Plasmodium falciparum} dynamics (this simplifies the model structure by rendering the inclusion of within-host dynamics less necessary). The articles \cite{auger2008ross, rodriguez2001models, prosper2012assessing} extend the SI (susceptible-infectious) Ross-Macdonald model to several patches. In particular, Prosper et al.  \cite{prosper2012assessing} analyzes a two-patch model, and the works Auger et al. \cite{auger2008ross}, Rodríguez and Torres-Sorando \cite{rodriguez2001models} study general $n$-patch models using arbitrary-large systems of ordinary differential equations (ODEs). In contrast to the original SI model framework, some spatial models represent human and mosquito populations in SEIR (susceptible-exposed-infectious-recovered) and SEI frameworks, respectively, so as to incorporate the significant latency period in malaria infections \cite{gao2012multipatch}. Non-ODE spatial models include a two-patch reaction-diffusion system and a numerical process-based model \cite{anicta2019regional, moukam2018spatial}. Some models specify a residence patch for each individual or  focus on long-period, long-distance movement \cite{prosper2012assessing, rodriguez2001models}. A large proportion of studies construct autonomous nonlinear ODE models, which are then analyzed using traditional methods (i.e., a linear stability analysis yielding bounds or explicit values for the models' $R_0$) \cite{auger2008ross, gao2012multipatch, rodriguez2001models, prosper2012assessing}. 

Models representing \textit{P. vivax} are generally rare, since the significance of the hypnozoite reservoir suggest that such models need a within-host component. White et al. \cite{white2014modelling} introduced a large ODE model (of up to $102$ equations) with host compartments of the form $S_i, I_i$, where individuals in $S_i$ or $I_i$ carry exactly $i$ hypnozoites, and numerically simulated the model to yield results on \textit{P. vivax} epidemiology. Mehra et al. \cite{Mehra} devised a framework for modeling both short-latency tropical and long-latency temperate malaria strains in a single host, representing hypnozoite accrual, activation, and clearance, as well as primary infection dynamics, through an open network of $\cdot/M/\infty$ (infinitely-many) server queues with batch arrivals. A probability generating function (PGF) is then derived to encapsulate the within-host dynamics associated with the individual in question. A population-level model was later built around this framework in Anwar et al. \cite{Anwar}.

In this paper, we propose and analytically study the first mathematical model combining \textit{P. vivax} within-host and metapopulation dynamics. We further extend the framework devised in Anwar et al. \cite{Anwar} and Mehra et al. \cite{Mehra}, generating a two-patch model where the human population is divided into moving and non-moving individuals. Assuming that moving individuals' migration patterns align, we derive a system of eight integro-differential equations for the human-vector populations. We proceed to study the limiting equations of the system analytically using the ansatz that solutions are limit-periodic, the assumption that mean population movement is sinusoidal with rapid frequency, and the additional assumption that patch populations see very small rates of change. We structure the paper as follows: In Section \ref{model}, we introduce the model, adapt the within-host framework in Mehra et al. \cite{Mehra} to derive our system of metapopulation-level equations, and consider the constant force of reinfection case. In Section \ref{results}, we present our analytical and numerical results before providing concluding remarks in Section \ref{discussion}. 

\section{Model development} \label{model}

We introduce spatial dynamics into the multi-scale \textit{Plasmodium vivax} framework developed in \cite{Anwar}. A representative schematic of the model is presented in Figure \ref{fig1}.

\subsection{Spatial patch model}
In the general case, we distribute our collection of individuals and \textit{Anopheles} mosquitoes across an arbitrary number of $n$ distinct patches (blue hexagons). Metapopulation movement between patches takes the form of traversing a connected sub-graph of $\mathbb{Z}$ with $n$ vertices representing the patches (Figure \ref{fig1}). Each Patch $i$ contains a local population of mosquitoes, $\mathcal{Q}_i$. We denote the set of patches containing permanent human settlement -- so-called   ``village patches" below -- by  $J \subset [1, n]$. In Figure \ref{fig1} village patches are represented by a brown border around their present populations and placed on the left side of the respective patches whilst populations on non-village patches are placed on the right side of their patch with a green border. 

Every patch $i \in J$ is the permanent residence of two subclasses of  humans: moving humans ($\mathcal{M}_i$) and non-moving humans ($\mathcal{N}_i$), as shown connected to each respective patch in Figure \ref{fig1}. In particular, we define $\mathcal{M}_i$ to represent the subpopulation of humans residing on Patch $i$ who are mobile and regularly travel to a different patch and define $\mathcal{N}_i$ as the subpopulation of humans who do not travel beyond their village.  
Viewing $\mathcal{N}_i$ and $\mathcal{M}_i$ as \textit{sets} of humans we define the population \be\begin{aligned} \mathcal{H}_i := \mathcal{N}_i \cup \mathcal{M}_i, \end{aligned}\ee the set of all humans with permanent residence on Patch $i$. The proportion that  each subpopulation forms of the total Patch $i$ population is represented without calligraphic font; $0 < M_i \leq 1$ and $N_i = 1-M_i$, where 
\be \begin{aligned} M_i := \frac{\abs{\mathcal{M}_i}}{\abs{\mathcal{H}_i}}, \hspace{20pt} N_i := \frac{\abs{\mathcal{N}_i}}{\abs{\mathcal{H}_i}}.\end{aligned} \ee

This is motivated by the studies Kunkel et al. \cite{kunkel2021choosing} and Sandfort et al. \cite{sandfort2020forest}, which found that most movement between areas with high and low malaria transmission in Cambodia is dominated by a single sub-population, typically consisting of younger men working as loggers or engaged in farming activities.  

We focus most of our analysis in this paper on the simplest version of this patch model, the $n = 2$ case of a ``village--forest" model (Figure \ref{fig2}) describing the situation in which humans travel between an area of high settlement and low \textit{Anopheles} density (Patch $1$, a ``village" location) to an area of low settlement and high \textit{Anopheles} density (Patch $2$, a ``forest" location). We consider interactions between a single human population $\mathcal{H}$ with permanent residence on Patch $1$ (separated into a moving set of humans $\mathcal{M}$ and a non-moving set of humans $\mathcal{N}$) and two sets of static mosquito populations $\mathcal{Q}_1$ and $\mathcal{Q}_2$, associated with Patch $1$ and Patch $2$, respectively.

Analogously to the $n$-patch case, we define the quantities 
\be\begin{aligned}M:= \frac{\abs{\mathcal{M}}}{\abs{\mathcal{H}}}, \hspace{20pt} N:= \frac{\abs{\mathcal{N}}}{\abs{\mathcal{H}}} \\ Q_1:= \frac{\abs{\mathcal{Q}_1}}{\abs{\mathcal{H}}}, \hspace{20pt} Q_2:= \frac{\abs{\mathcal{Q}_2}}{\abs{\mathcal{H}}},\end{aligned}\ee
such that $M, N$ represent the proportion that the moving and non-moving populations, respectively, form of the total human population, while $Q_1, Q_2$ represent the sizes of the village and forest mosquito populations, respectively, relative to the human population.

The choice of qualifiers ``forest" and ``village" is motivated by studies exploring the epidemiological dynamics that result when individuals in South and Southeast Asia travel between forested areas where malaria is endemic and neighboring villages where malaria is largely suppressed \cite{kunkel2021choosing, ranjha2021forest, sandfort2020forest}.

\begin{figure}
     \centering
     \begin{subfigure}[b]{0.45\textwidth}
         \centering
         \includegraphics[width=\textwidth]{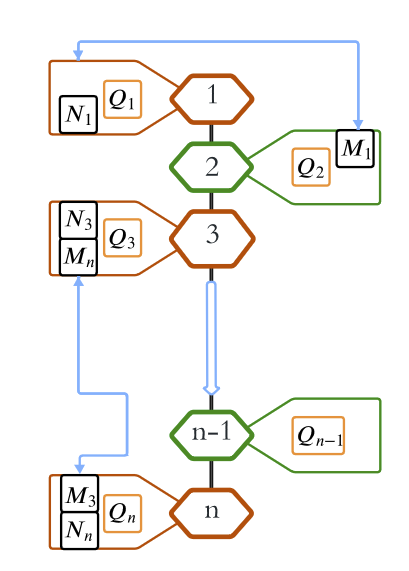}
         \caption{}
         \label{fig1}
     \end{subfigure}
    \begin{subfigure}[b]{0.5\textwidth}
         \centering
         \includegraphics[width=\textwidth]{Two-Patch.png}
         \caption{}
         \label{fig2}
     \end{subfigure}
     \caption{(a) Diagram of the $n$-patch model at a certain time $t$ (not including inter-compartment interactions). Patches $2$ and $n-1$ are ``forest patches" (outlined in green) while Patches $1$, $3$, and $n$ are ``village" patches (outlined in brown). Each patch $i$ is associated with a mosquito population $\mathcal{Q}_i$ (outlined in orange), while all village patches $j \in J \subset [1, n]$ contain a non-moving human population $\mathcal{N}_j$. Light blue arrows indicate the movement of a group of individuals within $\mathcal{M}_k, k \in J$, traveling from their patch ($k$) of permanent residence to a different patch (arrows are two-sided to indicate that moving individuals may return). (b) Diagram of the two-patch model at some time $t$. The proportions of moving individuals on Patches $1, 2$, respectively, are $A_1(t), A_2(t)$. Unlike Patch $1$, Patch $2$ does not contain \textit{permanent} human residents, and thus has no non-moving individuals. }
\end{figure} 

We make the following simplifying assumptions in our two-patch model:

\begin{enumerate}[label=(\roman*)]
\item Only humans can move between patches, while mosquitoes remain on a single patch. This assumption is motivated by Auger et al. \cite{auger2008ross} and by estimates of \textit{Anopheles} dispersal ranges, which have been found to be below a kilometer on average \cite{midega2007estimating, saddler2019development}. 

\item The moving group $\mathcal{M}$ moves deterministically at the time-dependent, periodic rates $r_1(t)$ (from Patch $1$ to Patch $2$) and $r_2(t)$ (from Patch $2$ to Patch $1$).

\item Each individual moving person is assumed to migrate in discrete jumps from one patch to the next. The sequence of transitions governing this movement is assumed to be time-periodic. We assume that each individual continues this periodicity of movement even throughout periods of malaria infection, a reasonable hypothesis in locations where asymptomatic cases dominate prevalence, such as Southeast Asia \cite{baum2016common, starzengruber2014high}.

\item We assume that the $\mathcal{Q}_1$ population is much smaller than the $\mathcal{Q}_2$ population (i.e., $Q_1 << Q_2$). Moreover, we suppose that $Q_1 + Q_2 < 1$ (from White et al. \cite{white2014modelling}, which places the mean mosquito-to-human ratio in a homogeneous location at under 0.6), and that the death rate of Patch $2$ mosquitoes is far lower than that of Patch $1$ mosquitoes. 

\item Immigration is not included, meaning that the  total number of humans is constant over time. We further assume that both $\abs{\mathcal{M}}$ and $\abs{\mathcal{N}}$ are considerable, which allows us to define the total number of individuals on a patch as a continuous function of time and assume deterministic dynamics for the population as a whole (even as individual primary infection and hypnozoite dynamics remain stochastic). 
\end{enumerate}

Under assumption (ii), we have that  \begin{equation} \label{eqbas} \begin{aligned} \frac{dA_1}{dt} = -r_1(t)A_1 + r_2(t)A_2 = - \frac{dA_2}{dt}. \end{aligned} \end{equation}

Since $A_1(t) + A_2(t) = 1$, both $A_1(t)$ and $A_2(t)$ are periodic non-constant functions for all initial conditions by Floquet's theorem \cite{dacunha2011unified}. Moreover, as $r_1(t), r_2(t) > 0$,  $A_1(t), A_2(t)$ lie between $0$ and $1$ for all $t$. We choose the initial conditions $A_1(0) = u + v, A_2(0) = 1 - u - v$, and let $r_1(t) = v\omega\sin(\omega t), r_2(t) = -v\omega\sin(\omega t)$, such that \be\label{A1A2} \begin{aligned} A_1(t) = u + v \cos\pa{\omega t}, \\ A_2(t) = 1 - u - v \cos\pa{\omega t}.\end{aligned} \ee for parameters $u, v, \omega$, where $\omega$ is in the units of days$^{-1}$. We assume that $0 < \abs{v} < < 0.5 < u < 1$ and $u + \abs{v} < 1$, which implies that the average density of moving individuals on Patch $1$ exceeds that on Patch $2$.

\subsubsection{Infection status within patches}

We briefly introduce some additional notation on the population-level, which will be referenced in the within-host analysis below. The force of reinfection (FORI), denoted $\lambda_i(t)$, is unique to a patch as it is directly dependent on the proportion of infectious mosquitoes resident on that patch.

We will use $I_M$, $I_N$ to denote the compartments of blood-infected moving and non-moving individuals, respectively. The liver-infected compartments $L_M$, $L_N$ -- containing individuals with latent hypnozoites, but no active blood infections -- are defined similarly. All other individuals are placed into the susceptible $S_M$ compartment (if moving) or the $S_N$ compartment (if non-moving). In particular, $M = I_M + L_M + S_M$ and $N = I_N + L_N + S_N$. The proportion of infectious mosquitoes on Patch $i$ will be given by $I_{mi}$, $i \in \{1, 2\}$.

\subsection{Within-host model}

\label{withinhost}

We begin by capturing within-host dynamics for an arbitrary non-moving individual. Here, we apply the analysis and resulting probability generating function (PGF) derived in \cite{Mehra}, which describes the stochastic primary infection and hypnozoite-related processes in the case of an individual in a homogeneous environment, as a function of the FORI associated with the environment.

Considering a single hypnozoite established in the host immediately after an infective bite at time $t = 0$, we let the initial \textit{establishment} state of the hypnozoite in a host hepatocyte be $H$. From Equation (13) in \cite{Mehra}, the probability $p_H(t)$ that a single short-latency hypnozoite established at time $0$ is in state $H$ at time $t$ satisfies
\be\label{hypnozoite1} \begin{aligned} & p_H(t) = e^{-\pa{\alpha + \mu}t}.  \end{aligned} \ee
An established hypnozoite may activate, transitioning to a state $A$ at rate $\alpha$ and causing a relapse infection that is cleared at rate $\gamma$ (at which point the hypnozoite enters a cleared state $C$). From Equations (14)-(15) in \cite{Mehra}, the probabilities $p_A(t), p_C(t)$ that a hypnozoite established at time $0$ is activated or cleared, respectively, at time $t$ are given by
\be\label{hypnozoite2} \begin{aligned}& p_{A}(t) = \frac{\alpha}{\alpha + \mu - \gamma}\pa{e^{-\gamma t} - e^{-(\alpha + \mu) t}}, \\ & p_C(t) = \frac{\alpha}{\alpha + \mu}\pa{1 - e^{-(\alpha + \mu)t}} - \frac{\alpha}{\alpha + \mu - \gamma} \pa{e^{-\gamma t} - e^{-(\alpha + \mu)t}}.  \end{aligned} \ee
Alternatively, an established hypnozoite may die, transitioning to a state $D$ at rate $\mu$. By Equation (16) in \cite{Mehra}, the probability $p_D(t)$ that a hypnozoite established at time $0$ is dead at time $t$ is given by
\be\label{hypnozoite3} \begin{aligned}&  p_D(t) = \frac{\mu}{\alpha + \mu} \pa{1 - e^{-(\alpha + \mu)t}}.  \end{aligned} \ee
In addition to defining states for each inoculated hypnozoite, we refer to an active primary infection as being in state $P$. Similarly to relapses, primary infections are cleared at rate $\gamma$, transitioning to a state $PC$. This chain of transitions is illustrated in Figure 2.
\begin{figure}
     \centering
         \includegraphics[width= 0.8\textwidth]{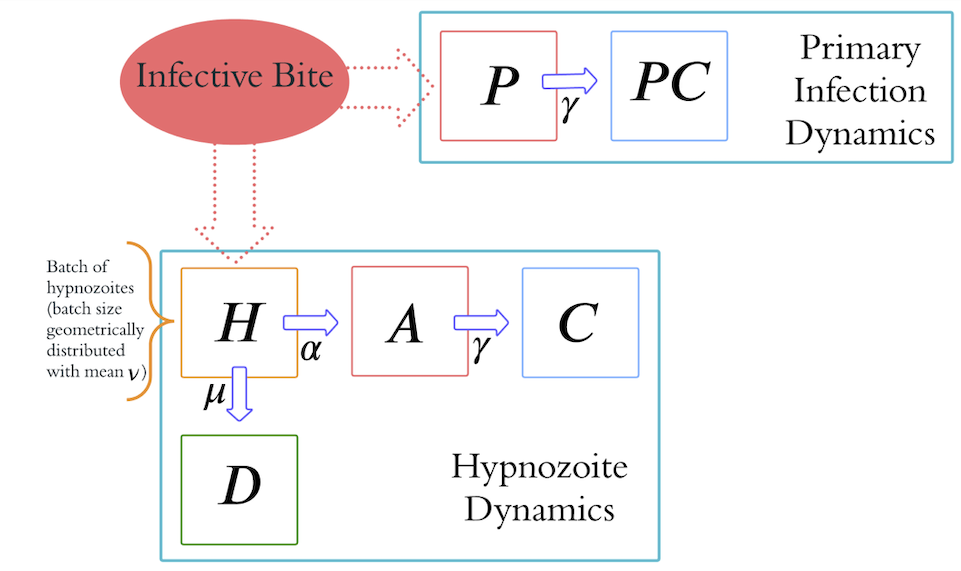}
     \caption{Schematic representation of primary infection and within-host dynamics for a single individual, based on the analysis in \cite{Mehra}. An infective bite (top left) establishes a primary infection (in within-host compartment $P$) and a batch of hypnozoites (initially in latent within-host compartment $H$), which activate at rate $\alpha$ and die at rate $\mu$, moving to within-host compartments $A$, $D$ respectively. Both primary infections and secondary infections caused by activated hypnozoites clear at rate $\gamma$, moving to compartments $PC$, $C$, respectively.}         \label{Schematic2}
\end{figure}

We assume that the time-dependent random variables representing the state of each hypnozoite dormant within the host are independent and identically-distributed (e.g., the probabilities in \eqref{hypnozoite1}, \eqref{hypnozoite2},\eqref{hypnozoite3} apply to each hypnozoite contained in the host). When a vector infects the individual with \textit{P. vivax} parasites, the individual receives both a primary infection and an inoculation of $h$ hypnozoites (with the probability mass function for $h$ being geometrically-distributed, and with $h$ having expected value $\nu$ and associated state space $[0, \infty) \cup \mathbb{Z}$), see Figure~\ref{Schematic2}.

\subsection{Probability-generating functions for within-host dynamics within moving and non-moving individuals}

The rate of infective bites on Patches $1$ and $2$ on the interval $[0, t]$  are modeled by non-homogeneous Poisson processes with rates $\lambda_1(t)$, $\lambda_2(t)$ respectively. We assume that hypnozoite inoculation is instantaneous. 

We denote the number of hypnozoites in the established, activated, cleared, and dead states by $\mathcal{N}_H(t), \mathcal{N}_A(t), \mathcal{N}_C(t), \mathcal{N}_D(t)$, respectively, and also denote the number of primary infections and cleared infections by $\mathcal{N}_P(t)$ and $\mathcal{N}_{PC}(t)$, respectively. 

Letting $F' = \{H, A, C, D, P, PC\}$ and $F =  \{H, A, C, D\}$, by \cite{Mehra}, the PGF yielding within-host dynamics for individuals in $\mathcal{N}$ is given by
\begin{equation}\label{eq:pgforiginal}
\begin{aligned} 
& G_{N}(z_H, z_A, z_C, z_D, z_P, z_{PC}) :=   \\ & \mathbb{E} [\prod_{f \in F'} z_{f}^{\mathcal{N}_f(t)}] = \exp \int_0^t  \pa{\lambda_{1}(\tau)\pa{\frac{\pa{z_{P}e^{-\gamma(t - \tau)} + (1 - e^{-\gamma(t - \tau)})z_{PC}}}{1 + \nu\pa{1 - \sum_{f \in F} z_{f} \cdot p_{f}(t - \tau)}} \hspace{3pt}} - 1} d\tau.
\end{aligned}
\end{equation}

We now extend the above PGF for an individual in the moving group $\mathcal{M}$. To the latter individual, we associate a discrete piecewise-continuous \textit{movement function} $\mathcal{B}_j(t):= \mathbb{R}^+ \to \{0, 1\}$, equal to $1$ when the individual $j$ is on Patch 1 (their home village) and to $0$ otherwise. The derivation of the PGF characterizing within-host dynamics for the moving individual is unaltered up to a change in the force-of-reinfection experienced by the individual. The within-host PGF for an individual in $\mathcal{M}$ with movement function $\mathcal{B}_j(t)$ thus becomes

\begin{equation}\label{eq:pgfmoving}
\begin{aligned} 
& G^{(j)}_{M}(z_H, z_A, z_C, z_D, z_P, z_{PC}) :=   \\ & \mathbb{E} [\prod_{f \in F'} z_{f}^{\mathcal{N}_f(t)}] = \exp \int_0^t  \Biggr[\mathcal{B}_j(\tau)\lambda_{1}(\tau) + \pa{1 - \mathcal{B}_j(\tau)}\lambda_{2}(\tau)\Biggr] \Biggr[\frac{\pa{z_{P}e^{-\gamma(t - \tau)} + (1 - e^{-\gamma(t - \tau)})z_{PC}}}{1 + \nu\pa{1 - \sum_{f \in F} z_{f} \cdot p_{f}(t - \tau)}} - 1\Biggr] \hspace{3pt} d\tau.
\end{aligned}
\end{equation}

Each function $\lambda_i(t), i \in \{1, 2\}$ depends both on the proportion of infectious mosquitoes (relative to the entire mosquito population) and on the mosquito-to-human ratio on this patch. We write 
\begin{equation}\label{lambdaeq} \begin{aligned} \lambda_1(t) = ab\ \cdot \ \frac{I_{m1}(t)}{Q_1} \ \cdot  \ \frac{Q_1}{\pa{N + (u + v \cos \omega t) \, M}} = \frac{ab\,I_{m1}(t)}{\pa{N + (u + v \cos \omega t) \, M}}; \\ \lambda_2(t) = ab\ \cdot \ \frac{I_{m2}(t)}{Q_2} \ \cdot  \ \frac{Q_2}{M \pa{(1-u) - v \cos \omega t}} = \frac{ab\,I_{m2}(t)}{M \pa{(1-u) - v \cos \omega t}} \end{aligned} \end{equation} by \eqref{A1A2}.
Here, the FORI on Patch $i$, $i \in \{1, 2\}$, is the product of $ab$ -- which represents the mean mosquito bite rate $a$ (in days$^{-1}$) across both patches multiplied by the mean probability of mosquito-to-human transmission $b$ -- with the proportion of infectious mosquitoes on Patch $i$, and the mosquito-to-human ratio on Patch $i$. 
Using the shorthands
\[f(t) := \frac{- e^{- \gamma t} - \nu p_{A}(t)}{1 + \nu p_{A}(t)}\] 
and
\[g(t):=  \frac{- e^{- \gamma t} - \nu p_{A}(t) - \nu p_H(t)}{1 + \nu p_{A}(t) + \nu p_H(t)},\] we apply \eqref{eq:pgforiginal} and \eqref{eq:pgfmoving} to characterize the probabilities that individual humans are liver- or blood-infected. We note here that, by Equation (51) in \cite{Mehra}, $\abs{f(t)}$ captures the probability that an infective bite at time $0$ leads to a blood-stage infection at time $t$. Similarly, $\abs{g(t)}$ captures the probability that an infective bite at time $0$ leads to either a blood-stage or a liver-stage infection at time $t$.

Following \cite{Anwar}, we note that the probability of an individual in $\mathcal{N}$ to be blood-infected (i.e., in the compartment $I_N$) is given by \be\label{probbloodn} \begin{aligned} 1 - G_N(z_H = 1, z_A = 0, z_C = 1, z_D = 1, z_P = 0, z_{PC} = 1) = 1 - \exp \pa{\int_{0}^{t} \lambda_1(\tau) f(t - \tau) \, d\tau}. \end{aligned} \ee Moreover, the probability that an individual in $\mathcal{M}$ characterized by movement function $\mathcal{B}_j(t)$ is blood-infected (i.e., in $I_M$) can be written as 
\be\label{probbloodm} \begin{aligned} G^{(j)}_M(z_H = 1, z_A = 0, z_C = 1, z_D = 1, z_P = 0, z_{PC} = 1) = \\ 1 - \exp \pa{\int_{0}^{t} \pa{\mathcal{B}_j(\tau) \lambda_1(\tau) + (1-\mathcal{B}_j(\tau))\lambda_2(\tau)} f(t - \tau) \, d\tau}. \end{aligned} \ee  
Similarly, the probability that an individual in $\mathcal{N}$ belongs to $L_N$ (the liver-infected stationary population) at time $t$ is given by 
\be\label{problivern} \begin{aligned} G_N(z_H = 1, z_A = 0, z_C = 1, z_D = 1, z_P = 0, z_{PC} = 1) \\ - \, G_N(z_H = 0, z_A = 0, z_C = 1, z_D = 1, z_P = 0, z_{PC} = 1) = \\ \exp\pa{\int_{0}^{t}\lambda_1(t)\, f(t - \tau) \, d \tau} -  \exp\pa{\int_{0}^{t}\lambda_1(t)\, g(t - \tau) \, d \tau},\end{aligned} \ee 
while the probability that an individual in $\mathcal{M}$ migrating with the same movement function as above belongs to $L_M$ (liver-infected migratory population) 
 at time $t$ is
\be\label{probliverm} \begin{aligned} G^{(j)}_M(z_H = 1, z_A = 0, z_C = 1, z_D = 1, z_P = 0, z_{PC} = 1) \\ - \, G^{(j)}_M(z_H = 0, z_A = 0, z_C = 1, z_D = 1, z_P = 0, z_{PC} = 1) =  \\ \exp\pa{\int_{0}^{t}\pa{\mathcal{B}_j(\tau)\lambda_1(\tau) + (1-\mathcal{B}_j(\tau))\lambda_2(\tau)}\, f(t - \tau) \, d \tau} - \\ \exp\pa{\int_{0}^{t}\pa{\mathcal{B}_j(\tau)\lambda_1(\tau) + (1-\mathcal{B}_j(\tau))\lambda_2(\tau)}\, g(t - \tau) \, d \tau}.\end{aligned} \ee

We note that, in order for differential equations to be used in describing population dynamics, the number of humans in the population must be modeled as a continuum. 
In other words, there must exist a bijection  between the space of moving humans and the closed interval $[0, M]$. We use this in Section \ref{eq:deriv} to tie the movement functions of each individuals $i \in \mathcal{M}$ to the functions $A_1(t), A_2(t)$. 

Figure \ref{comparplot} illustrates the probabilities \eqref{probbloodm} and \eqref{probliverm} for two moving individuals migrating with different periods relative to the population's mean period of movement ($2\pi/\omega$). We note that the slow-moving individual exhibits far greater amplitude of oscillations in infection probability than the rapidly-moving individual, and that both individuals' probabilities of becoming liver- and blood-infected are periodic functions with multiple frequencies. Oscillations in infection probability vary across time, with variations coinciding with changes in the rate of mean population migration. 

In Section \ref{population-level}, we will use the probabilities derived above (particularly \eqref{probbloodn} and \eqref{probbloodm}) to formulate population-level equations governing the time evolution of infected humans.

\begin{figure}[h!]
\begin{center}
\includegraphics[width=\textwidth]{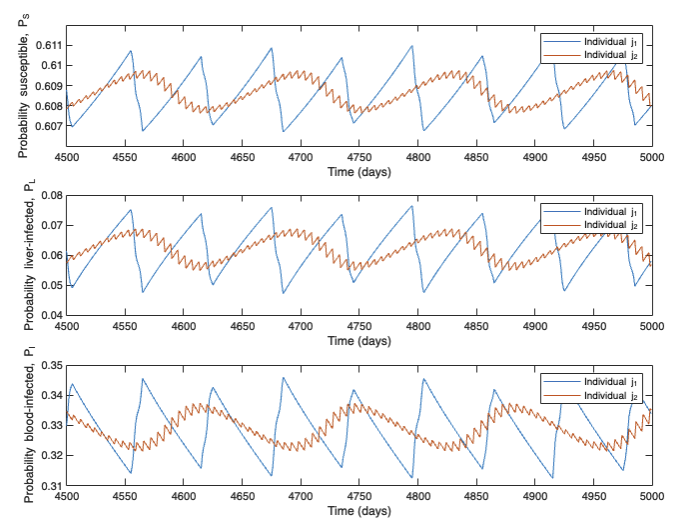}
\end{center}
\caption{ Comparison of probabilities for a slow-moving ($j_{1} \in \mathcal{M}$) and fast-moving ($j_{2} \in \mathcal{M}$) individual to be susceptible ($P_S$), liver-infected ($P_L$), and blood-infected ($P_I$), respectively, at time $t \in [4500, 5000]$. We assume the parameter values to be those in Table \ref{parameters}, so that the mean period of population movement is $2\pi/\omega = 2\pi$ days. The fast-moving individual (brown lines) switches patch at a frequency of $1 \ \textnormal{day}^{-1}$, above the population mean frequency; the slow-moving individual (blue lines) travels at a frequency of $\frac{1}{60} \, \textnormal{days}^{-1}$, below the population mean frequency. Probabilities are derived from Equations \eqref{lambdaeq}, \eqref{probbloodm}, and \eqref{probliverm}. }\label{comparplot}
\end{figure} 

\begin{table}[h!]\label{table1}
\centering\begin{tabular}{| m{1em} || m{16em}  m{2.5cm} m{3.3cm} | } \hline
  & \centering Definition & \centering Value & \hspace{10pt} Source \\ [0.5ex] 
 \hline\hline
  $a$ & 
  Mean daily mosquito bite rate & \centering 0.21 days$^{-1}$ & \hspace{25 pt}  \cite{garrett1964human} \\ \hline
  $b$ & 
  Probability of mosquito to human transmission & \centering 0.5* & \hspace{25 pt}  \cite{smith2010quantitative} \\ \hline
  $c$ & 
  Probability of human to mosquito transmission & \centering 0.23* & \hspace{25 pt}  \cite{bharti2006experimental} \\ \hline
  $g_1$ & 
  Patch $1$ mosquito demography rate & \centering 0.1 days$^{-1}$ & \hspace{25 pt}  \cite{gething2011modelling} \\ \hline
  $g_2$ & 
  Patch $2$ mosquito demography rate & \centering 0.08 days$^{-1}$ & \hspace{25 pt} \cite{tan2008bionomics} \\ \hline
  $n$ & 
  Mosquito sporogyny rate & \centering $1/12$ days$^{-1}$ & \hspace{25 pt}  \cite{gething2011modelling} \\ \hline
  $Q_1$ & 
  Proportion of mosquitoes on Patch $1$ relative to number of humans & \centering 0.116* & \hspace{25pt}\cite{overgaard2003effect,white2014modelling} \\ \hline
  $Q_2$ & 
  Proportion of mosquitoes on Patch $2$ relative to number of humans & \centering 0.464* & \hspace{25pt}\cite{overgaard2003effect,white2014modelling} \\ \hline
  $u$ & 
  $A(0)$, where $A(t) = u + v \cos \omega t$ is the density of moving humans on Patch $1$ at time $t$ & \centering 0.5 days$^{-1}$& \hspace{9pt}$\substack{\text{\normalsize Arbitrary,} \\ \text{ \normalsize later varied}}$\\ \hline
  $v$ & 
  $A(0) - A(\pi/2)$ & \centering -0.25 days$^{-1}$& \hspace{5pt} $\substack{\text{\normalsize Arbitrary,} \\ \text{ \normalsize later varied}}$\\ \hline
  $\omega$ & 
  $(2\pi)\, \times$ the frequency of human movement & \centering 1 day$^{-1}$ & \hspace{-17pt} $\substack{\text{\normalsize Chosen due to the} \\ \text{ \normalsize travel patterns } \\ \text{ \normalsize mentioned in \cite{bannister2019forest} }}$ \\ \hline
   $\alpha$ & 
  Hypnozoite activation rate & \centering 1/332 days$^{-1}$ &  \hspace{25 pt}  \cite{white2014modelling} \\ \hline
   $\mu$ & 
  Hypnozoite death rate & \centering 1/425 days$^{-1}$ & \hspace{25 pt} \cite{white2014modelling} \\ \hline
   $\gamma$ & 
  Blood-infection clearance rate & \centering 1/60 days$^{-1}$ & \hspace{25 pt}  \cite{collins2003retrospective} \\ \hline
   $\nu$ & Mean number of hypnozoites established in a bite
   & \centering $5$* & \hspace{25 pt}  \cite{white2016variation} \\ \hline
\end{tabular}
\caption{List of baseline values for model parameters, used when generating numerical solutions. Most parameter values are retrieved from \cite{Anwar}; original sources are listed in the fourth column. Asterisks indicate a dimensionless parameter. Given the model's representation of village-forest patch geography and focus on \textit{P. vivax} dynamics, parameter values are derived for Southeast Asia and the Americas.}\label{parameters}
\end{table}

\subsection{Population-level model}\label{population-level}

 We now turn from the individual-level scale to the population-level one. In our model, we do not explicitly represent population-level interactions between the six human compartments defined in Section \ref{withinhost}. Rather, we use the transitions between compartments embedded inside the within-host model to characterize all compartment densities as functions of the two forces of reinfection \eqref{lambdaeq}.

 \subsubsection{Population-level densities of infectious moving and non-moving individuals}
 \label{eq:deriv}

 Since the non-moving population is homogeneous with respect to hypnozoite and infection dynamics and the number of non-moving humans is assumed to be arbitrarily large, the density of $I_N(t)$ approaches the product of $N$ and the probability that a single non-moving individual is infectious (see, e.g., the hybrid modeling strategies in \cite{mehra2023superinfection, nasell2013hybrid}). This probability is given by \eqref{probbloodn} and hence $I_N(t)$ is equal to
 \[I_N(t) = N - N \exp\pa{\int_{0}^{t}\lambda_1(t)\, f(t - \tau) \, d \tau}.\] 

 Similarly, we can closely approximate the density of infectious moving individuals in the population by the expected size of $I_M(t)$.  
 To calculate $I_M(t)$, we let there be a bijection mapping each 
 moving person $j \in \mathcal{M}$ to a periodic, deterministic \textit{movement function} $\mathcal{B}_j(t)$, equal to $1$ if the corresponding individual is on Patch $1$ and to $0$ otherwise.  
 We note that the probability of an individual with movement function $\mathcal{B}_j(t)$ to be in $I_M$ at time $t$ is, by \eqref{probbloodm} in Section \ref{withinhost}, 
 \[p_j(t) = 1 - \exp\pa{\int_{0}^{t}  \pa{\underbrace{\mathcal{B}_j(\tau)}_{\substack{\text{Patch 1} \\ \text{indicator} \\ \text{function}}} \underbrace{\lambda_1(\tau)}_{\substack{\text{Patch 1} \\ \text{FORI}}} + \underbrace{(1 - \mathcal{B}_j(\tau))}_{\substack{\text{Patch 2} \\ \text{indicator} \\ \text{function}}} \underbrace{\lambda_2(\tau)}_{\substack{\text{Patch 2} \\ \text{FORI}}}}f(t - \tau)\ d\tau}.\]

 Given $i \in \mathcal{M}$, we assume that, for all $j \in \mathcal{M}$, \be\label{mjt}\begin{aligned}\mathcal{B}_j(t) = \mathcal{B}_i(t + \tau_{j})\end{aligned}\ee for some $\abs{\tau_{j}} << 1 \in \mathbb{R}$ and all $t \in [0, \infty)$, i.e., all movement functions coincide up to arbitrarily-small time-shifts. For simplicity and ease of analysis, we will replace each $\mathcal{B}_j(t)$ with its first-order approximation -- i.e., the first two terms in the Fourier series for $\mathcal{B}_j(t)$. 
 \begin{lem}\label{lem1} The first-order approximation $B_j(t)$ of $\mathcal{B}_j(t)$ satisfies \[\lim_{\sum_{k \in M} \abs{\tau_{k}} \to \, 0}B_j(t) = u + v \cos(\omega t),\] where $u, v, \omega$ are defined in \eqref{A1A2}. \end{lem}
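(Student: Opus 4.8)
The plan is to connect each individual's movement function to the population-level quantity $A_1(t)$ through the continuum bijection $\mathcal{M}\to[0,M]$, and then to exploit that passing to the first-order Fourier approximation is a linear projection which commutes with both the average over $j$ and with time-translation. First I would record the relation that gives $A_1$ its meaning: since $A_1(t)$ is the proportion of the moving population on Patch $1$ and each $\mathcal{B}_j(t)$ is the Patch-$1$ indicator of individual $j$, the continuum average of the indicators is the true mean movement
\[
\bar{\mathcal{B}}(t) := \frac{1}{M}\int_0^M \mathcal{B}_j(t)\,dj,
\]
and the sinusoid $A_1(t) = u + v\cos(\omega t)$ of \eqref{A1A2} is exactly the first-order (``sinusoidal mean movement'') Fourier approximation of $\bar{\mathcal{B}}$. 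Because every $\mathcal{B}_j$ is, by \eqref{mjt}, a time-shift of the single reference function $\mathcal{B}_i$, all of them share its fundamental period; since their average inherits that period and $A_1$ oscillates at frequency $\omega$, I would take the common fundamental frequency to be $\omega$, so that all Fourier expansions below are taken with respect to $\omega$.

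Next I would introduce the projection $P$ onto $\mathrm{span}\{1,\cos(\omega t),\sin(\omega t)\}$, i.e. the operator returning the first two terms of the Fourier series, so that $B_j = P[\mathcal{B}_j]$ by definition. Two elementary facts drive the argument. First, $P$ is linear and bounded, hence commutes with the average over $j$, giving
\[
\frac{1}{M}\int_0^M B_j(t)\,dj = P\big[\bar{\mathcal{B}}\big](t) = A_1(t) = u + v\cos(\omega t).
\]
Second, $P$ commutes with time-translation: shifting $t\mapsto t+\tau_j$ multiplies the $n$-th Fourier coefficient by $e^{in\omega\tau_j}$ and so preserves exactly which harmonics survive, whence \eqref{mjt} gives $B_j(t) = B_i(t+\tau_j)$.

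Finally I would pass to the limit $\sum_{k\in\mathcal{M}}|\tau_k|\to 0$ (equivalently $\int_0^M|\tau_j|\,dj\to0$), under which $\tau_j\to0$ for every individual carrying nonnegligible mass. Since $B_i$ is a trigonometric polynomial and therefore Lipschitz, $B_j(t)=B_i(t+\tau_j)\to B_i(t)$ uniformly, and likewise $\frac{1}{M}\int_0^M B_j\,dj\to B_i$. But the displayed identity shows this average equals $u+v\cos(\omega t)$ for \emph{every} admissible configuration of shifts, so in the limit $B_i(t)=u+v\cos(\omega t)$; consequently $B_j(t)\to u+v\cos(\omega t)$, with the sine component absent because $A_1$ carries none.

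The step I expect to be the main obstacle is not the trigonometric bookkeeping but the justification of the consistency relation together with the interchange of the limit and the continuum integral. One must argue that $A_1$ is legitimately the first-order projection of the mass-average $\bar{\mathcal{B}}$ of the indicators (this is precisely where the sinusoidal-mean-movement ansatz enters), that the $\mathcal{B}_j$ genuinely share the single fundamental frequency $\omega$ so that $P$ is defined against one frequency, and that an $L^1$-collapse of the shift family $\{\tau_j\}$ indeed produces uniform convergence of the shifted trigonometric polynomials $B_i(\cdot+\tau_j)$. Once these are secured the conclusion follows immediately.
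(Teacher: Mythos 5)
Your proof is correct and follows essentially the same route as the paper's: both identify $A_1$ with the $j$-average of the movement functions, use the fact that the time-shifts in \eqref{mjt} only rotate the phases of the first-order Fourier coefficients, and let the shifts collapse so that each individual's first-order approximation coincides with that of the average, namely $u+v\cos(\omega t)$. The paper carries this out by matching the constant term and the first-harmonic amplitudes $x,y$ explicitly, whereas you package the identical computation as commutation of the Fourier projection with averaging and translation; the difference is purely presentational.
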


Given Lemma \ref{lem1}, we can make the approximation \[\mathcal{B}_j(t) \approx B_j(t) := u + v \cos\pa{\omega t},\] from where it follows that 
\begin{equation}\label{densityofmovinginf} \begin{aligned} I_M(t) \approx M - M \exp\pa{\int_{0}^{t} \pa{A_1(\tau) \lambda_1(\tau) + A_2(\tau) \lambda_2(\tau)}f(t - \tau)\ d\tau}. \end{aligned} \ee

For completeness, we also formulate expressions for $L_N(t)$ and $L_M(t)$. Via a similar argument to the above, we obtain that the density of liver-infected non-moving individuals in the population is 
\be\label{livdensityn} \begin{aligned} L_N(t) = \underbrace{N \exp\pa{\int_{0}^{t}\lambda_1(t)\, f(t - \tau) \, d \tau}}_{\substack{\text{density of non-moving individuals} \\ \text{without blood infections}}} -  \underbrace{N \exp\pa{\int_{0}^{t}\lambda_1(t)\, g(t - \tau) \, d \tau}}_{\substack{\text{density of non-moving individuals} \\ \text{without liver or blood infections}}},\end{aligned} \ee and
the density of liver-infected moving individuals is 
\begin{equation}\label{livdensity} \begin{aligned} L_M(t) \approx \underbrace{M \exp\pa{\int_{0}^{t}\pa{\underbrace{A_1(\tau)}_{\substack{\text{Patch $1$} \\ \text{density}}} \lambda_{1}(\tau) + \underbrace{A_2(\tau)}_{\substack{\text{Patch $2$} \\ \text{density}}} \lambda_{2}(\tau)}\, f(t - \tau) \, d \tau}}_{\substack{\text{density of moving individuals without blood infections}}} -  \\ \underbrace{M \exp\pa{\int_{0}^{t}\pa{\underbrace{A_1(\tau)}_{\substack{\text{Patch $1$} \\ \text{density}}} \lambda_{1}(\tau) + \underbrace{A_2(\tau)}_{\substack{\text{Patch $2$} \\ \text{density}}} \lambda_{2}(\tau)}\, g(t - \tau) \, d \tau}}_{\text{density of moving individuals without liver or blood infections}}.\end{aligned} \end{equation}

\begin{proof}[Proof of Lemma \ref{lem1}]
    We first note that the density of individuals on Patch 1 can be written as
    \[A_1(t) = \frac{1}{M} \int_j \mathcal{B}_j(t) \ dj.\] Since all $\mathcal{B}_j$ have the same period, $\mathcal{B}_j(t) = \mathcal{B}_j(t + 2\pi/\omega)$ for all $j \in M$. Fixing $j \in \mathcal{M}$, we compute that the constant term of the Fourier series for $\mathcal{B}_j$ is 
    \be \begin{aligned} \frac{\omega}{2\pi}\int_0^{2\pi/\omega} \mathcal{B}_j(\tau) \ d \tau = \frac{1}{M} \int_j \pa{\frac{\omega}{2\pi}\int_0^{2\pi/\omega} \mathcal{B}_j(\tau) \ d \tau}  d j = \frac{\omega}{2\pi} \int_0^{2\pi/\omega} A_1(\tau) \ d \tau = u.\end{aligned} \ee 

    Similarly, upon decomposing all $\mathcal{B}_j$ into Fourier series, we note that, by \eqref{mjt}, the second terms of all such Fourier series must be of the form $x\cos(\omega t + \tau_j) + y\sin(\omega t + \tau_j)$. Moreover, $\lim_{\tau_j \to 0} x = v, \lim_{\tau_j \to 0} y = 0$ since $\frac{1}{M} \int_j x\cos(\omega t + \tau_j) + y\sin(\omega t + \tau_j) dj = v \cos (\omega t)$. The statement of the lemma follows. 
\end{proof}

\subsubsection{Derivation of population-level model with coupled human-mosquito dynamics}\label{deriv:population}To represent mosquito dynamics on both patches, we construct a nonlinear SEI-type model based on \cite{Anwar}. We divide the sets $Q_1$, $Q_2$ into three sub-populations each, denoted by $S_{mi}, E_{mi}, I_{mi}$, $i \in {1, 2}$, and representing susceptible, exposed, and infectious mosquitoes, respectively (Figure \ref{Schematic(3)}). Here $g_i$ denotes the (constant as opposed to seasonal) birth and death rates for mosquitoes on Patch $i$, with $g_1 > g_2$. Mosquito sporogony occurs at rate $n$. The force of infection $\Phi_i(t)$ for mosquitoes on Patch $i$, $i \in {1, 2}$, is the product of the mean mosquito bite rate, $a$, the probability of human-to-mosquito transmission, $c$, and the proportion of infectious humans on Patch $i$ (relative to the total number of humans on Patch $i$). In particular,
\begin{equation}\begin{aligned}& \Phi_1(t) = \frac{ac \, \pa{I_{N} + (u - v \cos \omega t) \, I_{M}}}{N + (u + v \cos \omega t) \, M}; \\ & \Phi_2(t) = \frac{ ac\,I_M}{M}.
\end{aligned}\end{equation}

\begin{figure}
     \centering    \includegraphics[width=0.6\textwidth]{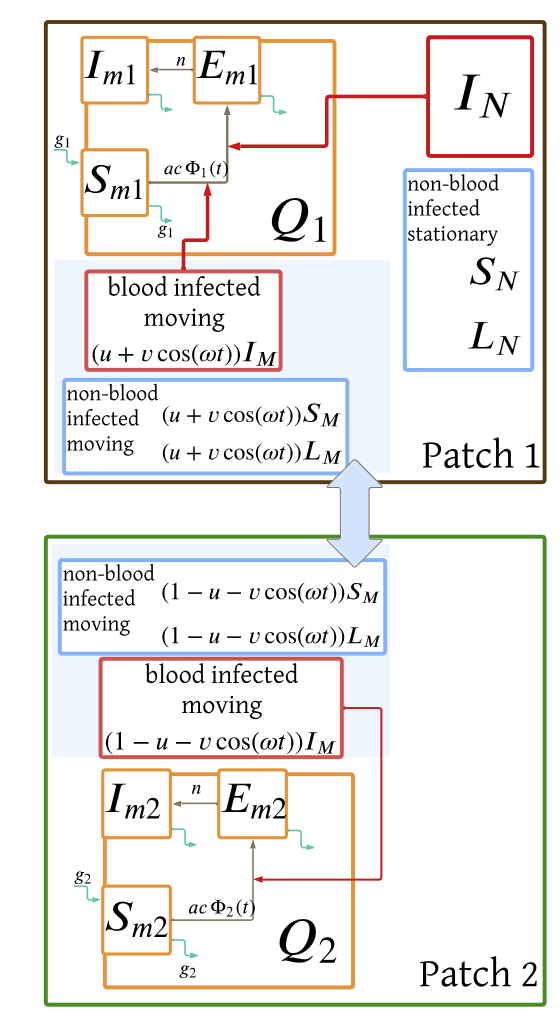}
     \caption{Schematic of the two-patch population-level model. Within-host processes (including  recovery, primary infection, or hypnozoite activation mechanisms) is omitted to best display population-level dynamics. The red arrows indicate infection mechanisms, while the large blue arrow represents the migration of moving individuals between Patch 1 (the upper large square) and Patch 2 (the lower large square). Within each patch, the group of moving individuals (including blood-infected, liver-infected, and susceptible travelers) is indicated with a lightly-shaded blue rectangle.}\label{Schematic(3)}
\end{figure} 

To determine $I_{m1}(t)$ and $I_{m2}(t)$ -- functions which then allow us to describe the infection and hypnozoite dynamics of the entire moving and non-moving population -- we solely require knowledge of the density of infected non-moving humans, $I_N(t)$, and the density of infected moving humans, $I_M(t)$. As a result, to complete our ``minimal" population-level model, we only need equations for $I_N(t)$ and $I_M(t)$. These equations follow directly from Section \ref{eq:deriv}. Our population-level model can thus be described by the following system of differential equations (with embedded  Volterra convolution equations of the first kind):

\begin{equation}\label{eq:humsimplify}
\begin{aligned}
    &\frac{d S_{m1}}{dt} = g_{1}Q_1 - \frac{ac \, \pa{I_{N} + (u + v \cos \omega t) \, I_{M}}}{N + (u + v \cos \omega t) \, M}\, S_{m1} - g_{1} S_{m1}; \\
    & \frac{d E_{m1}}{dt} = \frac{ac \, \pa{I_{N} + (u + v \cos \omega t) \, I_{M}}}{N + (u + v \cos \omega t) \, M}\, S_{m1} - (g_{1} + n)E_{m1}; \\
    & \frac{d I_{m1}}{dt} = n E_{m1} - g_{1} I_{m1}; \\
    &\frac{d S_{m2}}{dt} = g_{2}Q_2 - \frac{ ac\,I_M}{M} S_{m2} - g_{2} S_{m2}; \\
    & \frac{d E_{m2}}{dt} = \frac{ ac\,I_M}{M} S_{m2} - (g_{2} + n)E_{m2}; \\
    & \frac{d I_{m2}}{dt} = n E_{m2} - g_{2} I_{m2};\\
    & I_{M} = M  - M \exp\pa{ab\,\int_{0}^{t}  \pa{\frac{I_{m1}(\tau)(u + v \cos \omega \tau)}{N + (u + v \cos \omega t)M}  \, + \frac{I_{m2}(\tau)((1 - u) - v\cos \omega \tau)}{((1 - u) - v\cos \omega \tau)M}} f(t - \tau)\, d\tau}; \\
    & I_{N} =  N - N \exp\pa{ab\,\int_{0}^{t}\frac{I_{m1} (\tau)}{N +  (u + v\cos \omega \tau) \, M}\, f(t - \tau) \, d \tau} \\
  \end{aligned}
\end{equation}
with initial conditions 
\be \begin{aligned} & S_{m1}(0) = S^{(0)}_{m1}; \hspace{3pt} E_{m1}(0) = E^{(0)}_{m1}; \hspace{3pt}I_{m1}(0) = I^{(0)}_{m1}; \\ &   \hspace{3pt}S_{m2}(0) = S^{(0)}_{m2}; \hspace{3pt}E_{m2}(0) = E^{(0)}_{m2}; 
\hspace{3pt}I_{m2}(0) = I^{(0)}_{m2}; \hspace{3pt} I_M(0) = 0; \hspace{3pt} I_N(0) = 0. \end{aligned} \ee

We note that this minimal model system does not incorporate the equation for the densities of individuals with liver-only infections. As described above, the solutions for $L_N(t), L_M(t)$ (unlike those for $I_N(t), I_M(t)$) do not directly determine the distribution of mosquitoes across compartments in the population-level model. Moreover, the equations for $L_N(t), L_M(t)$ are uncoupled from those for $I_N(t), I_M(t)$ via the analysis in Section \ref{withinhost}, meaning that \eqref{livdensityn} and \eqref{livdensity} can be excluded from the model in \eqref{eq:humsimplify}.
We return to the implications of the solutions for $I_{m1}(t), I_{m2}(t)$ on $L_N(t), L_M(t)$ in the sections that follow.

\subsubsection{Limiting prevalence assuming constant forces of re-infection}\label{constantFORIsection}

We now consider a situation in which the forces of re-infection (FORIs) on patches $1$ and $2$ -- $\lambda_1, \lambda_2$, respectively -- are \textit{constant}. In this case, a single equation may be derived for $I_M$, which will yield insight for the impact of movement on transmission. 

As before, we let the expected proportion of people on patches $1$ and $2$ at time $t$ be of the form $u + v  \cos(\omega t), 1 - u - v \cos(\omega t)$, respectively. 

From \eqref{densityofmovinginf}, the density of infectious moving individuals is  

\begin{equation}\label{imt} \begin{aligned} I_M(t) \approx M - M \exp \pa{\lambda_1 \int_0^t \pa{u + v\cos(\omega t)} \, f(t - \tau) \, d\tau  + \lambda_2 \int_0^t \pa{1 - u - v\cos(\omega t)} \, f(t - \tau) \, d\tau}. \end{aligned} \end{equation}

We define $I^{*}_M(t)$ as the periodic function satisfying $\lim_{t \to \infty} \abs{I_M(t) - I^{*}_M(t)} = 0$. We seek $I^{*}_M(t)$ in terms of the parameters, including $u, v$, and $\omega$. We note that

\begin{equation}\label{toforisub} \begin{aligned} I_M(t) = M - M \exp \pa{(\lambda_1 u + \lambda_2 (1 - u)) \int_{0}^{t} f(\tau) \ d\tau} \exp \pa{(v\lambda_1 -v\lambda_2) \int_0^t \cos(\omega t) f(t - \tau) \, d\tau},\end{aligned} \end{equation}
where \begin{equation} \begin{aligned} \int_{0}^t \cos(\omega \tau) f(t - \tau) \, d\tau =  \frac{1}{2} e^{\omega \textbf{i} t} \int_{0}^t \pa{e^{-\omega \textbf{i} \tau}}  \ f(\tau)\, d\tau + \frac{1}{2} e^{-\omega \textbf{i} t} \int_{0}^t \pa{e^{\omega \textbf{i} \tau}} \ f(\tau)\, d\tau.\end{aligned} \end{equation} 

We observe that $I^{*}_M(t)$ is a periodic function with period $\frac{2\pi}{\omega}$. This follows from the bound

\[\lim_{t \to \infty} \int_0^t \frac{-e^{- \gamma \tau} - \nu p_{A}(\tau)}{1 + \nu p_A\pa{\frac{ \ln\pa{\gamma/(\alpha + \mu)}}{-\alpha - \mu + \gamma}}} \ d\tau > \lim_{t \to \infty} \int_0^t f(\tau) \ d\tau >  \lim_{t \to \infty} \int_0^t \pa{-e^{- \gamma \tau} - \nu p_{A}(\tau)}  \ d\tau\] 

which implies that the improper integrals 

\be\label{C0} \begin{aligned} C_0 = \lim_{t \to \infty} \int_0^t f(\tau) \ d\tau; \hspace{5pt} C_1 := \lim_{t \to \infty} \int_0^t e^{-\omega \textbf{i} \tau} f(\tau) \ d\tau; \hspace{5pt} C_{-1} := \lim_{t \to \infty} \int_0^t e^{\omega \textbf{i} \tau} f(\tau) \ d\tau,\end{aligned}\ee converge to constants in $\mathbb{C}$.

Fixing $t >> 1$, from \eqref{tosub}, we see that $I^{*}_M(t)$ is a monotonically-decreasing function of $u$ (where we use the fact that $\lambda_1 < \lambda_2$). Moreover, since  
 $\max_{t} I^{*}_M(t) - \min_{t} I^{*}_M(t)$ is proportional to the value of 
\[\exp \pa{v\lambda_1 -v\lambda_2} \pa{ \max_{t} \exp \pa{ \int_0^t \cos(\omega t) f(t - \tau) \, d\tau} - \min_{t} \exp \pa{\int_0^t \cos(\omega t) f(t - \tau) \, d\tau}},\] 
$\max_{t} I^{*}_M(t) - \min_{t} I^{*}_M(t)$ is a monotonically-decreasing function of $\abs{v}$ and a monotonically-increasing function of $\omega$ (an equivalent statement will be shown for the non-constant FORI system \eqref{eq:humsimplify} in Appendix A).

For sufficiently-large $\omega$ and sufficiently-small $\abs{v}$, we observe that $I_M(t)$ can be closely decomposed by a constant term and two oscillating terms with frequency $1/\omega$. This is relevant to our choice of ansatz when seeking solutions for the full population-level model with coupled human-vector dynamics.

\subsection{Numerical solution of multiscale model}\label{numerical_solution}

In order to simulate the model in \eqref{eq:humsimplify}, we implemented a solver for integro-differential equationswith step size $t_{\textnormal{step}}$ (see Algorithm \ref{alg:1}). For each value of $t \in {k \cdot t_{\textnormal{step}}}, k \in \mathbb{Z}, k \in [0, 5000/t_{\textnormal{step}}]$, the solver generated values for $I_M(t)$ and $I_N(t)$ using the history of the FORI. The solver then used the newest values for $I_M(t)$ and $I_N(t)$ to update the values of $S_{m1}$, $S_{m2}$, $E_{m1}$, $E_{m2}$, $I_{m1}$, $I_{m2}$ via the fourth-order Runge Kutta. The initial conditions used were $S_{m1} = 0.116, S_{m2} = 0.406, E_{m2} = 0.058, I_{m2} = E_{m1} = I_{m1} = I_N = I_M = 0$. 
\begin{enumerate}
\begin{algorithm}
\caption{}\label{alg:1}
\item Set $t = 0$ and define values for $t_{\textnormal{end}}, t_{\textnormal{step}}$
\item 
\begin{algorithmic}
\While{$t < t_{\textnormal{end}}$}

Compute $I_N(t), I_M(t)$ using a Darboux sum with $I_{m1}(\tau), I_{m2}(\tau)$, $\tau \in [0, t]$
\vspace{10pt}

Update $S_{mi}(t + t_{\textnormal{step}}), E_{mi}(t + t_{\textnormal{step}}), I_{mi}(t + t_{\textnormal{step}})$, $i \in {1, 2}$ using $I_N(t), I_M(t)$ and fourth-order Runge-Kutta (1000 steps) on the time interval $(t, t + t_{\textnormal{step}})$
\vspace{10pt}

Set $t = t + t_{\textnormal{step}}$
\EndWhile
\end{algorithmic}
\end{algorithm}
\end{enumerate}

\section{Results} \label{results}
\subsection{Existence and location of the limiting-state solutions of System \eqref{eq:humsimplify}}\label{approx}

Given non-zero $v$, the system \eqref{eq:humsimplify} of integro-differential equations is non-autonomous. Setting the left hand sides of the system to zero indicates that \eqref{eq:humsimplify} possesses only one equilibrium solution, namely the disease-free equilibrium (DFE), satisfying $E_{m1} = E_{m2} = I_{m1} = I_{m2} = I_M = I_N = 0$, $S_{m1} = Q_1$, $S_{m2} = Q_2$. Moreover, one can observe from the right hand sides of the equations in \eqref{eq:humsimplify} that the DFE is the only fixed-point limit set associated to an orbit of the dynamical system defined by \eqref{eq:humsimplify}.

Numerical experiments (see Section \ref{numresults}) indicate that the endemic solutions to \eqref{eq:humsimplify} are asymptotically non-constant periodic, which is further supported by our analysis in the case when the FORI is constant (see Section \ref{constantFORIsection}). 
In this section, we make the assumption that the biologically-realistic solutions to \eqref{eq:humsimplify} are limit-periodic, and analyze the location of endemic solutions when they exist.  

Accordingly, we seek solutions to \eqref{eq:humsimplify} that can be decomposed as
\begin{equation}\label{coefffull}
\begin{aligned}
& S_{mi}(t) = \sum_{n = -\infty}^{\infty} u^{(i)}_n(t) \exp\pa{\omega\, \textbf{i}\, n \, t}, \hspace{5pt} i \in \{1, 2\}; \\
& E_{mi}(t) = \sum_{n = -\infty}^{\infty} v^{(i)}_n(t) \exp\pa{\omega \, \textbf{i}\, n \, t}, \hspace{5pt} i \in \{1, 2\};  \\
& I_{mi}(t) = \sum_{n = -\infty}^{\infty} w^{(i)}_n(t) \exp\pa{\omega \, \textbf{i}\, n \, t}, \hspace{5pt} i \in \{1, 2\};  \\
& I_{M}(t) = \sum_{n = -\infty}^{\infty} x_n(t) \exp\pa{\omega \, \textbf{i}\, n \, t};  \\
& I_{N}(t) = \sum_{n = -\infty}^{\infty} y_n(t) \exp\pa{\omega \, \textbf{i}\, n \, t}, \\
\end{aligned}
\end{equation} for complex-valued functions $u^{(1)}_{n}(t), v^{(1)}_{n}(t), w^{(1)}_{n}(t), u^{(2)}_{n}(t), v^{(2)}_{n}(t), w^{(2)}_{n}(t), x_{n}(t), y_{n}(t)$ that tend towards a steady state, such that 
\begin{equation} \begin{aligned}
&\lim_{t \to \infty} \pa{u^{(1)}_{n}(t), v^{(1)}_{n}(t), w^{(1)}_{n}, u^{(2)}_{n}(t), v^{(2)}_{n}(t), w^{(2)}_{n}(t), x_{n}(t), y_{n}(t)} = \\ & \pa{u^{(1)}_{n}, v^{(1)}_{n}, w^{(1)}_{n}, u^{(2)}_{n}, v^{(2)}_{n}, w^{(2)}_{n}, x_{n}, y_{n}} \in \mathbb{C}^{5}. \end{aligned}
\end{equation} 
As $t \to \infty$, the solutions of \eqref{eq:humsimplify} can then be described by a Fourier series with constant coefficients. For sufficiently small $a, b, c$, the right hand sides of \eqref{eq:humsimplify} imply that the nonconstant arguments of each of our eight limiting solutions are small. 

Defining $S^{*}_{m1}(t)$ as the periodic function satisfying $\lim_{t \to \infty} \pa{S^{*}_{m1}(t) - S_{m1}(t)} = 0$, and defining the functions $E^{*}_{m1}, I^{*}_{m1}, S^{*}_{m2}, E^{*}_{m2}, I^{*}_{m2}, I^{*}_M, I^{*}_N$ similarly, we formulate the ansatz that  

\begin{equation}\label{coeff}
\begin{aligned}
& S^{*}_{mi}(t) \approx u^{(i)}_{-1} \exp \pa{-\omega \textbf{i} t} + u^{(i)}_0 + u^{(i)}_{1} \exp \pa{\omega \textbf{i} t} + u^{(i)}_\epsilon, \hspace{5pt} i \in \{1, 2\}; \\
& E^{*}_{mi}(t)  \approx v^{(i)}_{-1} \exp \pa{-\omega \textbf{i} t} + v^{(i)}_0 + v^{(i)}_{1} \exp \pa{\omega \textbf{i} t} + v^{(i)}_\epsilon, \hspace{5pt} i \in \{1, 2\};  \\
& I^{*}_{mi}(t) \approx w^{(i)}_{-1} \exp \pa{-\omega \textbf{i} t} + w^{(i)}_0 + w^{(i)}_{1} \exp \pa{\omega \textbf{i} t} + w^{(i)}_\epsilon, \hspace{5pt} i \in \{1, 2\};  \\
& I^{*}_{M}(t)  \approx x_{-1} \exp \pa{-\omega \textbf{i} t} + x_0 + x_{1} \exp \pa{\omega \textbf{i} t} + x_\epsilon;  \\
& I^{*}_{N}(t)  \approx y_{-1} \exp \pa{-\omega \textbf{i} t} + y_0 + y_{1} \exp \pa{\omega \textbf{i} t} + y_\epsilon, \\
\end{aligned}
\end{equation}
where the Fourier coefficients associated to non-constant terms are much smaller than those associated to constant terms and \be \begin{aligned} f^{(i)}_{\epsilon} << f^{(i)}_1, \hspace{0.2cm} f'_{\epsilon} << f'_1\end{aligned} \ee for $f \in \{u, v, w\}, f' \in \{x, y\}, i \in \{1, 2\}$. We note that the DFE ($u^{(1)}_0 = Q_1, u^{(2)}_0 = Q_2$, all other coefficients equal to zero) is a solution of \eqref{eq:humsimplify}.

In Appendix A, we find a single nonlinear equation \eqref{eqforw2}, the solution set of which contains all solutions for $w_0^{(2)}$ assuming sufficiently-small $\abs{v}$ and sufficiently-large $\omega$. We also outline a method to extract approximations for all other Fourier coefficients in \eqref{coeff}. This yields approximations for the endemic solutions of our system \eqref{eq:humsimplify} given the ansatz in \eqref{coeff}. 

To calculate limiting solutions for $L_N(t)$ and $L_M(t)$ from the approximated solutions for $I^*_{m1}(t), I^*_{m2}(t)$, we note that, by \eqref{livdensityn} and \eqref{livdensity}, 
\be\label{lnsolution} \begin{aligned} L_N(t) \approx N \exp\Bigr[R\pa{f(t), I_{m1}}\Bigr] - N \exp \Bigr[R\pa{g(t)},I_{m1}\Bigr],\end{aligned} \ee and  
\be\label{lmsolution} \begin{aligned} L_M(t) \approx M \exp\Bigr[R\pa{f(t), I_{m1}} + R\pa{f(t), I_{m2}}\Bigr] - M \exp \Bigr[R\pa{g(t), I_{m1}} + R\pa{g(t), I_{m2}}\Bigr].\end{aligned} \ee  
Here, 

\be \begin{aligned} R\pa{r(t), I_{m1}} = \frac{uab\pa{\pa{\int_0^{\infty} r(\tau) d\tau}w^{(1)}_0 + \pa{\int_0^{\infty} e^{-\omega \textbf{i} \tau} r(\tau) d\tau}w^{(1)}_{1}e^{\omega \textbf{i} t} + \pa{\int_0^{\infty} e^{\omega \textbf{i} \tau} r(\tau) d\tau} w^{(1)}_{-1}e^{-\omega \textbf{i} t}}}{N + Mu} \\ 
R\pa{r(t), I_{m2}} = \frac{ab\pa{\pa{\int_0^{\infty} r(\tau) d\tau}w^{(2)}_0 + \pa{\int_0^{\infty} e^{-\omega \textbf{i} \tau} r(\tau) d\tau}w^{(2)}_{1}e^{\omega \textbf{i} t} + \pa{\int_0^{\infty} e^{\omega \textbf{i} \tau} r(\tau) d\tau} w^{(2)}_{-1}e^{-\omega \textbf{i} t}}}{M}.\end{aligned} \ee

The solution set of \eqref{eqforw2} includes the solution $w_0^{(2)} = 0$ (which implies that all other Fourier coefficients but $u_0^{(1)}, u_0^{(2)}$ are zero, and immediately corresponds to the DFE solution of \eqref{eq:humsimplify}). In Appendix B, we find necessary and sufficient conditions for \eqref{eqforw2} to have no endemic solutions. In particular, we obtain the necessary condition 
\be\label{condition1}\begin{aligned}\frac{Mg_2(g_2 + n)}{ a^2 b c n \abs{C_0} Q_2} > 1.\end{aligned} \ee
and the sufficient condition
 \be\label{condition2} \begin{aligned} \frac{Mg_2(g_2 + n)}{Q_2}\pa{\frac{1}{a^2bcn\abs{C_0}} - \frac{Q_1Mu}{{2g_1\pa{g_1 + n}\pa{N+Mu}}}\pa{\frac{u+v}{N + (u + v)M} + \frac{u-v}{N + (u - v)M}}} > 1. \end{aligned} \ee 
We remark on the biological implications of these conditions below.

\begin{rem}\label{rem1} If \[\frac{1}{a^2bcn\abs{C_0}} < \frac{Q_1Mu}{{2g_1\pa{g_1 + n}\pa{N+Mu}}}  \pa{\frac{u+v}{N + (u + v)M} + \frac{u-v}{N + (u - v)M}},\] a malaria epidemic may persist indefinitely solely as a result of the Patch 1 mosquito population. Otherwise, the left hand side of \eqref{condition2} is a monotonically-increasing function of $g_1, g_2, \abs{v}$ and a monotonically-decreasing function of $a, b, c, \abs{v}, Q_1, Q_2$. We notice that, for realistic values of parameters such as those in [5], increases in $g_1$ or decreases in $Q_1$ may be insufficient to prevent disease endemicity, as \[Mg_2(g_2 + n) < a^2 b c n \abs{C_0} Q_2.\] However, for sufficiently-large $g_2$ or sufficiently-small $Q_2$, our approximation for \eqref{eq:humsimplify} has no endemic solutions. \end{rem} 

The conditions \eqref{condition1} and \eqref{condition2} suggest that the $R_0$ for our system satisfies
\be\label{R0bound} \begin{aligned}\frac{a^2 b c n \abs{C_0} Q_2}{Mg_2(g_2 + n)} \le R_0 \le \frac{Q_2}{Mg_2(g_2 + n)\pa{\frac{1}{a^2bcn\abs{C_0}} - \frac{Q_1Mu}{{2g_1\pa{g_1 + n}\pa{N+Mu}}}\pa{\frac{u+v}{N + (u + v)M} + \frac{u-v}{N + (u - v)M}}}}.\end{aligned} \ee

\begin{rem}\label{rem2} The lower bound on the $R_0$ in \eqref{R0bound} is equivalent to the basic reproduction number on Patch $2$ in the case that the forest patch contains a non-moving population of density $M$ (such that the movement between the patches is removed). In other words, if the system does not approach the DFE in a homogeneous population of density $M$, \eqref{eqforw2} possesses endemic solutions. \end{rem} 

We prove Theorem \ref{thm2} in Appendix B. The claim in Remark \ref{rem2} is addressed in Appendix C. 

\subsection{Numerical results}\label{numresults}
Solutions for $I_N(t), I_M(t), I_{m1}(t), I_{m2}(t)$ from \eqref{eq:humsimplify} for parameter values implying slower movement shows peaks in $I_M(t)$ and $I_{m1}(t)$ are closely aligned with troughs in $I_N(t)$ and $I_{m2}(t)$, respectively; oscillations are small and difficult to discern (Figure \ref{Numerical_Solution}). We note that even low-to-moderate proportions of infectious mosquitoes (e.g., $8\%$ of all mosquitoes on Patch 1, $19\%$ of all mosquitoes on Patch 2) can result in very high densities of blood-infected moving individuals: the proportion of infectious moving humans relative to the total human population is approximately $0.3939$, meaning that approximately $98\%$ of $\mathcal{M}$ is infectious for the parameter values in Table \ref{parameters}, in comparison to approximately $17\%$ of $\mathcal{N}$.

 Though the mean period of human movement does not significantly impact the density of infectious individuals, longer periods of movement correspond to greater fluctuations in non-moving human prevalence (Figure \ref{fig6}). As described analytically in Appendix A, the amplitude of oscillations in solutions to \eqref{eq:humsimplify} is a monotonically decreasing function of $\omega$. As a result, if the periodic forcing associated with migration is seasonal, movement patterns may have a significant impact on prevalence at any given time. 
 
 For both $\omega = 1$ day$^{-1}$ and $\omega = 1/120$ day$^{-1}$, the density of blood-infected individuals is far higher in $M$ than in $N$, owing to the difference in FORI experienced by the two subpopulations (Figure \ref{fig6}). Liver infectiousness in $N$ peaks early and recedes as $t \to \infty$; liver infectiousness in $M$ and blood infectiousness in both groups is monotonically increasing.    

We calculated approximate solutions for \eqref{eq:humsimplify} under the parameter values in Table \ref{parameters} by first obtaining the largest solution of \eqref{eqforw2}, at which point the value of $w_0^{(2)}$ was used to find all remaining Fourier coefficients in \eqref{coeff}. Analytical approximations generally correspond to numerical solutions (Figure \ref{numvsapprox}).  We note that the approximate solutions for $I^{*}_M(t)$ and $I_{m2}^*(t)$ are particularly close to the numerical ones (less than $0.015\%$ lower), while the approximated solutions for $I^{*}_N(t)$ and $I_{m1}^{*}(t)$ are less accurate (up to $3\%$ lower). Figure \ref{heatmap}, however, illustrates that the approximation for $I^{*}_N$ grows closer to the numerical solution as $v$ decreases and $u$ increases. 

Comparing the constant-FORI expression for $I^*_M(t)$ in \ref{constantFORIsection} to the numerical solution for $I^*_M(t)$ in \eqref{eq:humsimplify} implies that variation in mosquito prevalence (here generated by human movement) propels and reinforces disease transmission across patches (Figure \ref{constantFORI}). In particular, numerical solutions to \eqref{eq:humsimplify} for $I_M(t), t >> 1$,  were compared with solutions to \eqref{toforisub} (generated by setting the constant proportion of infectious mosquitoes  on Patch 1 to $I^c_{m1} = 0.008$, slightly above the maximum of the   solution for $I^*_{m1}(t)$ in \eqref{eq:humsimplify}, and by setting the constant proportion of infectious mosquitoes on Patch 2 to $I^c_{m2} = 0.089$, above the  solution for $I^*_{m2}(t)$ in \eqref{eq:humsimplify}). This choice of constant FORI ensured that 
\[I^c_{m1}(t), I^c_{m2}(t) > I_{m1}(\tau), I_{m2}(\tau)\] for all  $\tau$, i.e., that the proportion of infectious mosquitoes in the constant-FORI case exceeded that in the non-constant-FORI case for all times. Nonetheless, the constant-FORI solution for $I^{*}_M(t)$ proved significantly smaller than the non-constant FORI solution. This implies that the oscillations in mosquito prevalence, and by extension  migration patterns, must be a cause for the extremely high prevalence in Figure \ref{Numerical_Solution}. 
\begin{figure}[h!]
\hfill 
\center
\includegraphics[width=\textwidth]{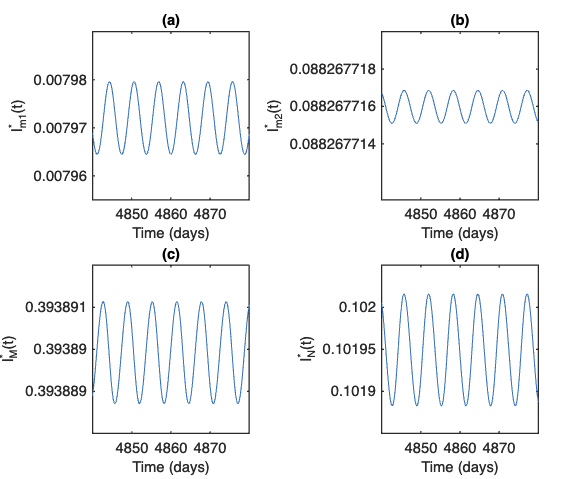}\caption{Solutions to \eqref{eq:humsimplify} using parameter values found in Table \ref{parameters}, with the exception of $\omega$, which was set to $1/5$ days$^{-1}$. Asterisks denote solution asymptotics.   }\label{Numerical_Solution}
\end{figure} 
    
\begin{figure}[h!]
\hfill 
\center
\begin{subfigure}[h!]{\textwidth}
         \centering
         \includegraphics[width=0.8\textwidth]{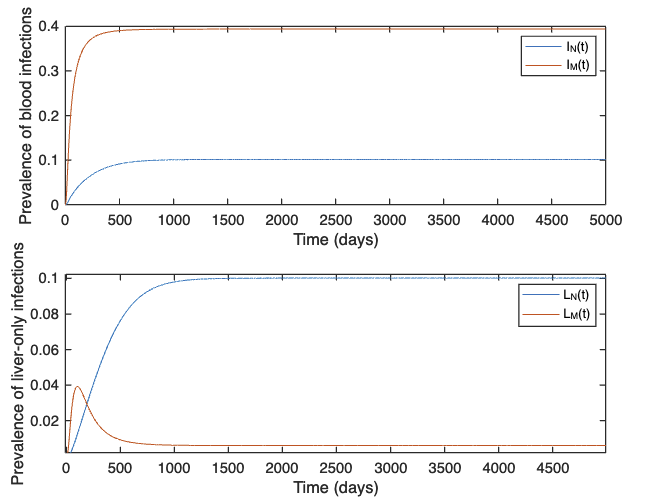}
            \caption{}

         \label{limprev}
     \end{subfigure}
\begin{subfigure}[h!]{0.82\textwidth}
         \centering
         \includegraphics[width=\textwidth]{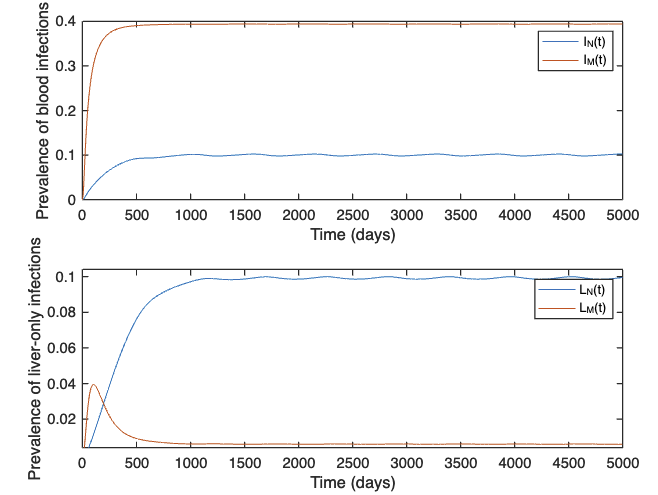}
         \caption{}
         \label{highomega}
     \end{subfigure}
\caption{Prevalence of blood and liver infections  for (a) $\omega = 1$ day$^{-1}$ and (b) $\omega = 1/120$ days$^{-1}$. Numerical solutions were computed using \eqref{eq:humsimplify}.}\label{fig6}\end{figure}

\begin{figure}[h!]
      \centering
     \includegraphics[width=0.8\textwidth]{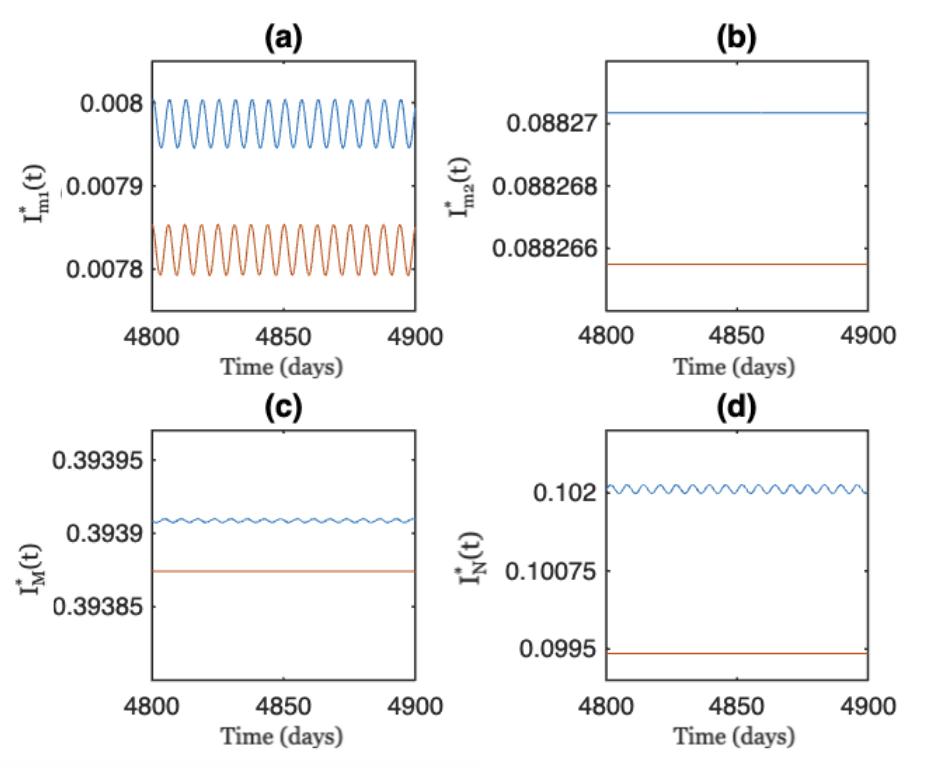}
          \label{approxplot}
      \caption{\small Approximate solutions of \eqref{eqforw2} are compared to numerical solutions of \eqref{eq:humsimplify} for each of (a) $I^*_{m2}(t)$, (b) $I^*_{m1}(t)$, (c)  $I^*_{M}(t)$, and (d) $I^*_N(t)$, using the parameter values in Table \ref{parameters}. In each plot, the blue curve illustrates the numerical solution, while the orange curve illustrates the approximate solution, derived by retrieving the largest solution of \eqref{eqforw2}.}
      \label{numvsapprox}
 \end{figure}

\begin{figure}[h!]
\centering
 \begin{subfigure}[h!]{\textwidth} \centering
\includegraphics[width=0.8\textwidth]{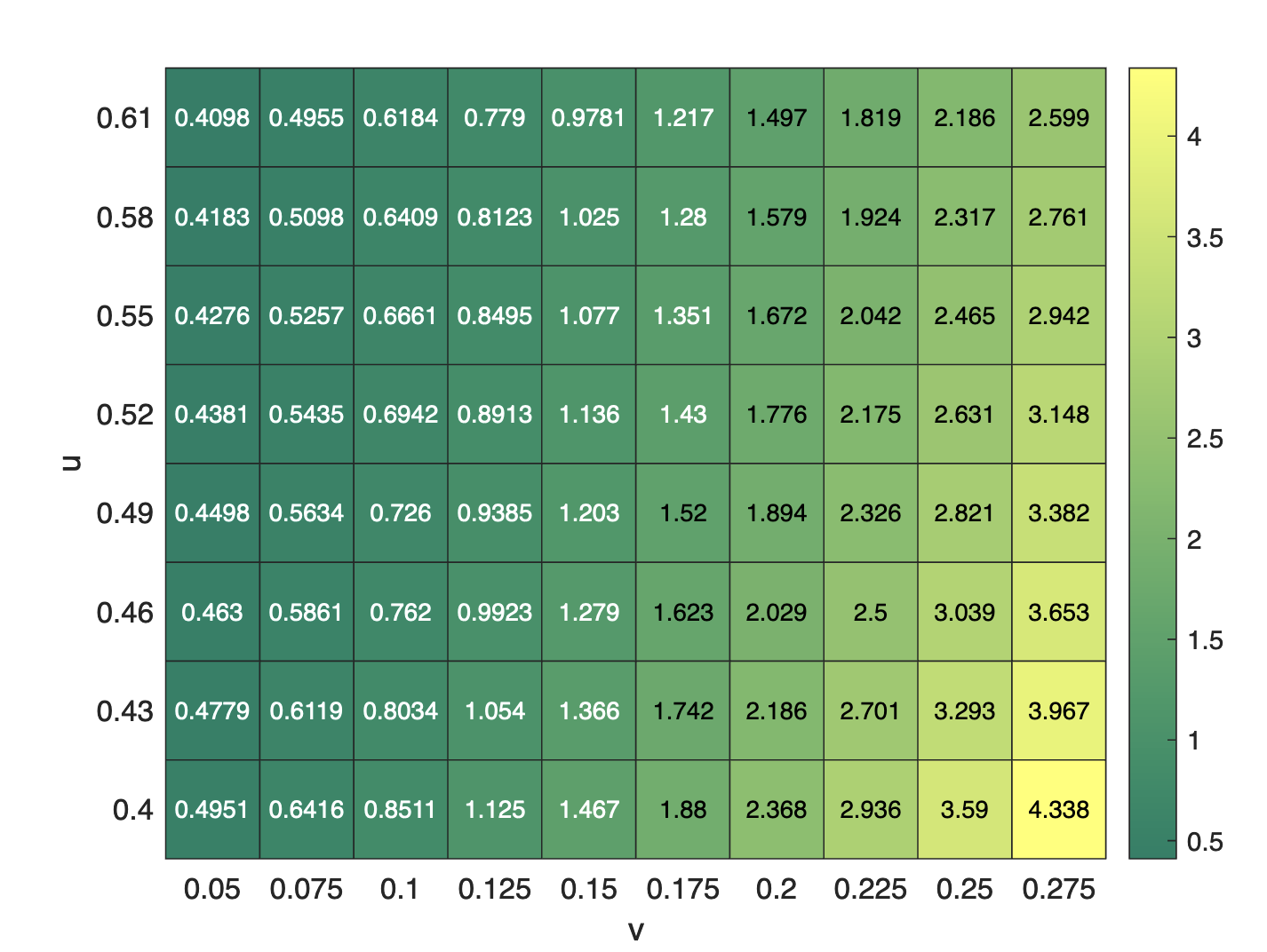} \caption{}

 \end{subfigure}    
    \begin{subfigure}[h!]{\textwidth}
         \centering
         \includegraphics[width=0.6\textwidth]{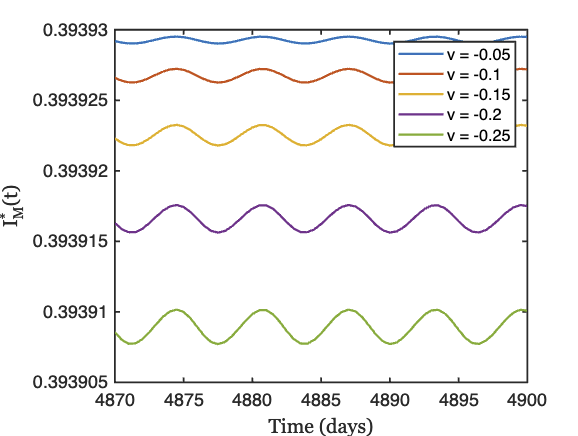}
         \caption{}
         \label{vplot}
     \end{subfigure}
      \caption{\small (a) Percent difference between the numerical solution of \eqref{eq:humsimplify} and the analytical approximation for $I^{*}_N$ (values inside the heatmap) for various values of $u, v$. The approximated solution is derived by first finding the largest solution for $w_0^{(2)}$ of \eqref{eqforw2}, and then using \eqref{forsub1}, \eqref{forsub3} in Appendix A to find $y_0$. (b) Prevalence among moving individuals after $4870$ days for various $v$. Remaining parameters are fixed at the values in Table \ref{parameters}.} \label{heatmap}\end{figure}
\begin{figure}[h!]
\centering
     \includegraphics[width=0.6\textwidth]{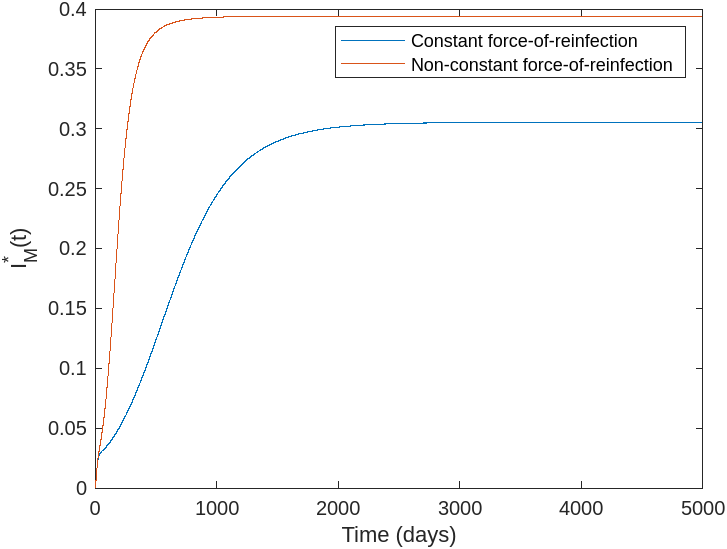}
      \caption{\small Comparison between limiting-state solution for $I_M(t)$ assuming constant FORI (see \eqref{toforisub}, blue line) and non-constant FORI (brown line). Parameter values taken from Table \ref{parameters}; values of $\lambda_1, \lambda_2$ for the constant FORI case selected to slightly exceed the limiting prevalence among mosquito populations on Patches $1$ and $2$, respectively, in the non-constant FORI case. }      \label{constantFORI}\end{figure}
\newpage

\section{Discussion} \label{discussion}

In this paper, we use a three-scale mathematical model -- combining the within-host, population-based, and spatial levels of analysis -- to study the impact of periodic travel patterns on the prevalence of \textit{Plasmodium vivax} infections. Our analysis centers on a two-patch ``village-forest” model characterized by the cyclic migration of a subset of humans from the low-transmission ``village” patch to the high-transmission ``forest” patch. In constructing our model, we further extend the framework developed in \cite{Anwar} and \cite{Mehra}.

We circumvent the need to include direct interactions between compartments of susceptible, liver-infected, and blood-infected humans, by recognizing that the distribution across mosquito states depends only on the densities of \textit{infectious} humans, which can in turn be approximated as functions of the infectious mosquito population using PGFs for within-host dynamics. This observation reduces the complexity of our model, allowing us to compose a ``minimal” system of IDEs, \eqref{eq:humsimplify}, but also leads us to introduce an additional assumption in Section \ref{eq:deriv}, where we suppose that all individuals move near-identically. Embedding the within-host framework from \cite{Mehra} into the model ensures that we account for \textit{superinfection}, in which individuals can possess multiple blood-stage infections simultaneously.

In order to study the model system \eqref{eq:humsimplify}, we introduce the ansatz in Section \ref{approx} that the asymptotic solutions of the system can be closely approximated by rapidly-convergent Fourier series, from where we reduce the original system \eqref{eq:humsimplify} to a single nonlinear equation, \eqref{eqforw2}, in Appendix A. In Appendix B, we determine necessary and sufficient parameter-based conditions for which the only solution to \eqref{eqforw2} corresponds to the DFE, obtaining bounds for the system's basic reproduction number, $R_0$ (see Section \ref{approx}). Though these results capture the impact of movement on the upper and lower bounds for $R_0$, the approximations underlying them are only reasonable for large $\omega$ and small $v$.

To our knowledge, our analysis is the first to incorporate the complex within-host dynamics of hypnozoite inoculation, activation, and clearance characteristic of \textit{P. vivax} into upper and lower bounds for a metapopulation model $R_0$. It may also be possible to extend the system asymptotics analysis in Appendix A and the bounds on the $R_0$ to a larger $n$-patch case, using certain simplifying assumptions on the inter-patch movement (e.g., assumptions on the extent of coupling between limiting equations in the metapopulation model).

\subsection{Biological Insights and Conclusions} \label{bioinsight}
  
Multiple biological applications can be proposed for the model introduced in this paper. Oscillations in malaria prevalence are frequently noted in both human and \textit{Anopheles} populations, but are commonly attributed solely to seasonal fluctuations in mosquito number \cite{grillet2014periodicity}. Our work suggests that, although the fundamental frequency of these oscillations is likely determined by seasonal changes, higher-frequency harmonics in the prevalence may be induced by the cyclic migration of individuals (which may repeat weekly-to-monthly \cite{bannister2019forest}). 

Moreover, though the mean frequency of movement does not impact average \textit{Plasmodium vivax} prevalence, less frequent movement results in greater variation in prevalence over time. We show that individuals traveling between a forest patch with high Anopheles density and a village patch with low Anopheles density see as much risk -- or more -- compared to individuals isolated on the forest patch, which emphasizes the role in movement in exacerbating \textit{P. vivax} epidemics.   

The sufficient and necessary conditions for our approximated equations to have no endemic solutions imply interesting results regarding the impact of parameters on the \textit{P. vivax} basic reproduction number. The upper bound on $R_0$, for instance, is a strictly decreasing function of $\abs{v}$, meaning that a larger value for $\abs{v}$ --- corresponding to more individuals traveling in groups --- may enable the system to reach a disease-free equilibrium. Moreover, the lower bound on $R_0$ have no nonzero solutions is completely independent of $g_1$ (the village mosquito death rate) and $Q_1$ (the village mosquito population size), while the upper bound depends far more strongly on $g_2, Q_2$ (the forest mosquito death rate and population size, respectively) than on $g_1, Q_1$. In the case of \textit{P. vivax} in populations of traveling loggers and farmers (so-called ``forest goers") in Southeast Asia, our work stresses the 
need to control forest mosquito populations in addition to village ones, emphasizing the use of outdoor-acting interventions such as spatial repellents, sterile insect techniques, and source reduction \cite{achee2012spatial, gu2006source, kitron1989suppression, klassen2009introduction}. 

In future work, we plan to extend these results to the location of endemic solutions, considering the impact of mass drug administration and non-pharmaceutical interventions on \textit{P. vivax} prevalence. We also plan to incorporate correlates of immunity in the model, which may alter the clear $\mathcal{M}$-dominated patterns of prevalence visible in Section \ref{numerical_solution} above. 

Addressing the continued endemicity of malaria in regions across Southeast Asia, the Americas, and the Pacific requires a thorough understanding of each of the mechanisms propelling \textit{P. vivax} transmission, and of the interplay between them. As the first paper consolidating representations of the three major scales -- the within-host, population, and metapopulation levels -- at which \textit{P. vivax} epidemiological dynamics occur, we have devised a framework in this paper that can be used to predict \textit{P. vivax} prevalence and to establish conditions for disease eradication given a variety of biological and geographical phenomena. 

\section*{Appendix A: Analytical Approximations for Fourier Coefficients in \eqref{coeff}}\label{proofthm1}
We prove the following result:
\begin{thm}\label{thm1} For sufficiently-small $\abs{v}$ and sufficiently-large $\omega$, the solution for $w^{(2)}_0$ is contained in the solution set of the nonlinear equation \be\label{eqforw2} \begin{aligned} & 
F_1(w_0^{(2)}, A(w_0^{(2)})) + F_2(w_0^{(2)}, A(w_0^{(2)})) = H(A(w_0^{(2)})), \end{aligned} \ee where \be \begin{aligned} & F_1(w_0^{(2)}, A(w_0^{(2)})) = \frac{N - N \exp \pa{\frac{ab\,\abs{C_0}A(w^{(2)}_0)}{N + Mu}} + \frac{M(u + v)g_2(g_2 + n)w^{(2)}_0 }{ac\pa{Q_2  n - (g_2 + n)w_0^{(2)}}}}{N + (u + v)M} \\ & F_2(w_0^{(2)}, A(w_0^{(2)})) = \frac{N - N \exp \pa{\frac{ab\,\abs{C_0}A(w^{(2)}_0)}{N + Mu}} + \frac{M(u - v)g_2(g_2 + n)w^{(2)}_0 }{ac\pa{Q_2  n - (g_2 + n)w_0^{(2)}}}}{N + (u - v)M}, \\ & H(A(w_0^{(2)})) = \frac{2g_1A(w^{(2)}_0)\pa{g_1 + n}}{ac\pa{nQ_1 - (g_1 + n)A(w^{(2)}_0)}}, \\ & A(w^{(2)}_0) = \frac{N + Mu}{ab\,C_0u} \pa{\frac{ab\,\abs{C_0}w^{(2)}_0}{M} + \ln\pa{1 - \frac{g_2(g_2 + n)w^{(2)}_0 }{ac\,\pa{Q_2 n - (g_2 + n)w_0^{(2)}}}}}, \end{aligned} \ee 
\be \begin{aligned} C_0 = \lim_{t \to \infty} \int_0^t f(\tau) \ d\tau, \end{aligned} \ee and a \textit{unique} value for all other Fourier coefficients (and hence limiting-state solutions for \eqref{eq:humsimplify}) can be determined from the value of $w^{(2)}_0$. In particular, when the only solution of \eqref{eqforw2} is $w^{(2)}_0 = 0$, no endemic solution exists. If nonzero solutions of \eqref{eqforw2} exist, then one can extract periodic endemic solutions approximately corresponding to solutions for \eqref{eq:humsimplify}. \end{thm}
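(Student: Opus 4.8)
The plan is to substitute the truncated Fourier ansatz \eqref{coeff} directly into the system \eqref{eq:humsimplify}, collect terms harmonic by harmonic (harmonic balance), and show that the whole hierarchy collapses onto a single scalar equation in $w_0^{(2)}$. The enabling observation is that once the coefficients have converged to constants, each convolution $\int_0^t I_{mi}(\tau) f(t-\tau)\,d\tau$ in the last two equations of \eqref{eq:humsimplify} reduces, as $t\to\infty$, to $\sum_{n=-1}^{1} w_n^{(i)} C_n e^{\omega \textbf{i} n t}$, with $C_0, C_1, C_{-1}$ the convergent integrals of \eqref{C0}. This turns the two human integral equations into purely algebraic relations among the Fourier coefficients, and it is precisely here that large $\omega$ and small $\abs{v}$ enter, allowing me to discard higher harmonics and products of the (small) non-constant coefficients.

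First I would run the $n=0$ (constant) balance on the Patch $2$ mosquito block. The $I_{m2}$ equation gives $v_0^{(2)} = (g_2/n)\,w_0^{(2)}$; adding the $S_{m2}$ and $E_{m2}$ balances cancels the nonlinear force-of-reinfection term and yields $u_0^{(2)} = Q_2 - \tfrac{g_2+n}{n}w_0^{(2)}$; and the $E_{m2}$ balance then pins the constant part of $I_M$ to $x_0 = \frac{M g_2 (g_2+n) w_0^{(2)}}{ac\pa{Q_2 n - (g_2+n) w_0^{(2)}}}$, an explicit rational function of $w_0^{(2)}$. Feeding $x_0$ into the constant part of the $I_M$ integral equation — where the Patch $2$ contribution $\frac{I_{m2}(\tau)((1-u)-v\cos\omega\tau)}{((1-u)-v\cos\omega\tau)M}$ simplifies exactly to $I_{m2}(\tau)/M$, so the exponent's constant part is $ab\,C_0\big(\tfrac{u w_0^{(1)}}{N+Mu}+\tfrac{w_0^{(2)}}{M}\big)$ to leading order — and equating $M\pa{1-\exp(\cdot)}=x_0$ lets me solve for $w_0^{(1)}$, reproducing the theorem's $A(w_0^{(2)})$ verbatim, so $w_0^{(1)} = A(w_0^{(2)})$. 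The $I_N$ equation then expresses $y_0$ through $A(w_0^{(2)})$ alone, namely as the common term $N - N\exp(\cdot)$ sitting in the numerators of both $F_1$ and $F_2$.

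The decisive step is the $n=0$ balance on the Patch $1$ block. Exactly as before one finds $v_0^{(1)} = (g_1/n)w_0^{(1)}$ and $u_0^{(1)} = Q_1 - \tfrac{g_1+n}{n}w_0^{(1)}$, and the $E_{m1}$ balance forces the time-averaged Patch $1$ force of reinfection to equal $\frac{g_1(g_1+n)w_0^{(1)}}{nQ_1-(g_1+n)w_0^{(1)}}$, i.e. half of $H(A(w_0^{(2)}))$. Approximating the time average of $\Phi_1$ by the mean of its two extreme-phase configurations $A_1 = u\pm v$, with $I_M, I_N$ frozen at their constant parts $x_0, y_0$, reproduces $\tfrac12(F_1+F_2)$, so the balance becomes exactly $F_1 + F_2 = H$, which is \eqref{eqforw2} once every quantity is rewritten through $A(w_0^{(2)})$ and the rational expressions for $x_0, y_0$. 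Since $w_0^{(2)}=0$ forces $A(w_0^{(2)})=0$, hence $x_0=y_0=0$ and thereby every remaining coefficient to vanish, it is always a root and corresponds to the DFE; whether a second (endemic) root exists is then a property of this scalar equation alone, giving the stated dichotomy.

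To close the "unique value for all other Fourier coefficients" claim, I would run the $n=\pm1$ balance: with $w_0^{(2)}$ — and therefore every constant coefficient — fixed, matching the $e^{\pm\omega\textbf{i}t}$ harmonics produces an inhomogeneous linear system in $\{u_{\pm1}^{(i)}, v_{\pm1}^{(i)}, w_{\pm1}^{(i)}, x_{\pm1}, y_{\pm1}\}$; for large $\omega$ the derivative contributions $\omega\textbf{i}n$ dominate the diagonal, so the matrix is invertible and the solution is unique, recovering the full periodic profile and hence an approximate endemic solution of \eqref{eq:humsimplify}. I expect the principal obstacle to be the time-dependent denominators $N+(u+v\cos\omega t)M$ appearing in $\Phi_1$ and in the human integral equations: these do not Fourier-expand in closed form, and making the "mean of the extremes" replacement defensible — it is only $O(v^2)$-accurate rather than exact — is the delicate point, which is exactly why the conclusion is hedged to small $\abs{v}$ and large $\omega$. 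A secondary difficulty is verifying self-consistency of the truncation in \eqref{coeff}, namely that harmonics with $\abs{n}\geq 2$ and products of non-constant coefficients are genuinely of higher order and may be neglected.
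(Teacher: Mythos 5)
Your proposal follows essentially the same route as the paper's Appendix A: harmonic balance of the truncated Fourier ansatz, reduction of the convolution integrals to the multipliers $C_0, C_{\pm 1}$, a zero-order replacement of the time-dependent denominators justified by small $\abs{v}$ and small $ab$, and a constant-harmonic balance that collapses to $F_1+F_2=H$ via exactly the paper's two extreme-phase configurations $A_1=u\pm v$ (the paper's \eqref{tosub}, obtained by collocating at $t=0,\pi/2,\pi$). The only cosmetic divergence is in pinning down the $\pm 1$ harmonics, where the paper reduces to a quartic and discards an extraneous root via Descartes' rule rather than invoking invertibility of a large-$\omega$-dominated linear system, but the conclusion and mechanism are the same.
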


\begin{proof}
We first write equations for the Fourier coefficients in \eqref{coeff}. The first six equations in \eqref{eq:humsimplify} can be written as
\begin{equation}\label{1}\begin{aligned} \pa{N + M\pa{u + \frac{v}{2} e^{\omega\textbf{i}\,t} + \frac{v}{2} e^{-\omega\textbf{i}\,t}}} \sum_{k = -1}^{1} \textbf{i} \, j \, u^{(1)}_j e^{\omega\textbf{i}\, j \, t} = \\ \pa{N + M\pa{u + \frac{v}{2} e^{\omega\textbf{i}\,t} + \frac{v}{2} e^{-\omega\textbf{i}\,t}}} \pa{g_1 Q_1 - g_1 \sum_{j = -1}^{1} u^{(1)}_j e^{\omega \textbf{i}\, j \, t}} - \\ ac\, \pa{\sum_{j = -1}^{1} u^{(1)}_j e^{\omega\textbf{i}\, j \, t}}\pa{\sum_{j = -1}^{1} y_j e^{\omega\textbf{i}\, j \, t} + \pa{u + \frac{v}{2} e^{\omega\textbf{i}\,t} +\frac{v}{2} e^{-\omega\textbf{i}\,t}}  \sum_{j = -1}^{1} x_j e^{\omega\textbf{i}\, j \, t}}, \end{aligned} \end{equation} 

\begin{equation}\label{2}\begin{aligned} \pa{N + M\pa{u + \frac{v}{2} e^{\omega\textbf{i}\,t} +\frac{v}{2} e^{-\omega\textbf{i}\,t}}} \sum_{j = -1}^{1} \textbf{i} \, j \, v^{(1)}_j e^{\omega\textbf{i}\, j \, t} = \\ ac\, \pa{\sum_{j = -1}^{1} u^{(1)}_j e^{\omega\textbf{i}\, j \, t}}\pa{\sum_{j = -1}^{1} y_j e^{\omega\textbf{i}\, j \, t} + \pa{u + \frac{v}{2} e^{\omega\textbf{i}\,t} + \frac{v}{2} e^{-\omega\textbf{i}\,t}}  \sum_{j = -1}^{1} x_j e^{\omega\,\textbf{i}\, j \, t}} -\\ \pa{g_1 + n}\pa{N + M\pa{u + \frac{v}{2} e^{\omega\textbf{i}\,t} + \frac{v}{2}e^{-\omega\textbf{i}\,t}}}  \pa{\sum_{j = -1}^{1} v^{(1)}_j e^{\omega\textbf{i}\, j \, t}}, \end{aligned} \end{equation} 

\begin{equation}\label{3}\begin{aligned} \sum_{j = -1}^{1} \textbf{i} \, j \,  w^{(1)}_j e^{\omega\, \textbf{i}\, j \, t} = j \, \sum_{j = -1}^{1} v^{(1)}_j e^{\omega\,\textbf{i}\, j \, t} - g_1 \sum_{j = -1}^{1} w^{(1)}_j e^{\omega\,\textbf{i}\, j \, t},
\end{aligned} \end{equation} 

\begin{equation}\label{4}\begin{aligned} \sum_{j = -1}^{1} \textbf{i} \, j \, u^{(2)}_j e^{\omega\,\textbf{i}\, j \, t} = g_2 Q_2 - \frac{ac}{2M}\pa{\sum_{j = -1}^{1} u^{(2)}_j e^{\omega\,\textbf{i}\, j \, t}}\pa{\sum_{j = -1}^{1} x_j \, e^{\omega\,\textbf{i}\, j \, t}} - \\ g_2 \sum_{j = -1}^{1} u^{(2)}_j e^{\omega\,\textbf{i}\, j \, t},\end{aligned} \end{equation} 

\begin{equation}\label{5}\begin{aligned} \sum_{j = -1}^{1} \textbf{i} \, j \,  v^{(2)}_j e^{\omega\,\textbf{i}\, j \, t} = \frac{ac}{2M}\pa{\sum_{j = -1}^{1} u^{(2)}_j e^{\omega\,\textbf{i}\, j \, t}}\pa{\sum_{j = -1}^{1} x_j \, e^{\omega\,\textbf{i}\, j \, t}} - \\  (g_2 + n) \sum_{j = -1}^{1} v^{(2)}_j e^{\omega\,\textbf{i}\, j \, t}, \end{aligned} \end{equation} 

\begin{equation}\label{6}
    \begin{aligned} 
    \sum_{j = -1}^{1} \textbf{i} \, j \,  w^{(2)}_j e^{\omega\textbf{i}\, n \, t} = n \sum_{j = -1}^{1} v^{(2)}_j e^{\omega\textbf{i}\, j \, t} - g_2 \sum_{j = -1}^{1} w^{(2)}_j e^{\omega\textbf{i}\, j \, t}.
    \end{aligned}
\end{equation}

To deal with the unwieldy equations \eqref{1} through \eqref{6}, we first make the change of parameters \be \begin{aligned} y = y_{-1} + y_{0} + y_{1}; \hspace{5 pt} w_1 =  w^{(1)}_{-1} + w^{(1)}_{0} + w^{(1)}_1; \hspace{5pt} w_2 = w^{(2)}_{-1} + w^{(2)}_{0} + w^{(2)}_1; \hspace{5 pt} v_1 =  v^{(1)}_{-1} + v^{(1)}_{0} + v^{(1)}_1; \\ v_2 = v^{(2)}_{-1} + v^{(2)}_{0} + v^{(2)}_1; \hspace{5 pt} \hspace{5 pt} u_1 =  u^{(1)}_{-1} + u^{(1)}_{0} + u^{(1)}_1; \hspace{5pt} u_2 = u^{(2)}_{-1} + u^{(2)}_{0} + u^{(2)}_1; \hspace{5 pt}
x = x_{-1} + x_{0} +  x_{1}, \end{aligned} \ee  
\be \begin{aligned} y^{*} = y_0 + \textbf{i}(y_{1} - y_{-1}); \hspace{5 pt} w^{*}_1 = w^{(1)}_0 + \textbf{i}(w^{(1)}_1 - w^{(1)}_{-1}); \hspace{5pt} w^{*}_2 =  w^{(2)}_0 + \textbf{i}(w^{(2)}_1 - w^{(2)}_{-1}); \\ v^{*}_1 =  v^{(1)}_0 + \textbf{i}(v^{(1)}_1 - v^{(1)}_{-1});  \hspace{5 pt} v^{*}_2 =  v^{(2)}_0 + \textbf{i}(v^{(2}_1 - v^{(2)}_{-1});   \hspace{5 pt} u^{*}_1 =  u^{(1)}_0 + \textbf{i}(u^{(1)}_1 - u^{(1)}_{-1}); \\ u^{*}_2 =  u^{(2)}_0 + \textbf{i}(u^{(2)}_1 - u^{(2)}_{-1});\hspace{5pt} x^{*} = x_0 + \textbf{i}(x_1 - x_{-1}), \end{aligned} \ee and
 \be \begin{aligned} \bar{y} = y_{-1} - y_{0} - y_{1}; \hspace{5 pt} \bar{w}_1 =  w^{(1)}_{-1} - w^{(1)}_{0} - w^{(1)}_1; \hspace{5pt} \bar{w}_2 = w^{(2)}_{-1} - w^{(2)}_{0} - w^{(2)}_1; \hspace{5 pt} \bar{v}_1 =  v^{(1)}_{-1} - v^{(1)}_{0} - v^{(1)}_1; \\ \bar{v}_2 = v^{(2)}_{-1} - v^{(2)}_{0} - v^{(2)}_1; \hspace{5 pt} \hspace{5 pt} \bar{u}_1 =  u^{(1)}_{-1} - u^{(1)}_{0} - u^{(1)}_1; \hspace{5pt} \bar{u}_2 = u^{(2)}_{-1} - u^{(2)}_{0} - u^{(2)}_1, \hspace{5 pt}
\bar{x} = x_{-1} - x_{0} -  x_{1}. \end{aligned} \ee  

Given this change of parameters, we can substitute the values $t = 0$, $t = \pi$, and $t = \pi/2$ into \eqref{1} through \eqref{6} to obtain 24 equations for the 24 new parameters above.
We then eliminate $u_1, u^{\star}_1, \bar{u}_1, u_2, u^{\star}_2, \bar{u}_2$ and $v_1, v^{\star}_1, \bar{v}_1, v_2, v^{\star}_2, \bar{v}_2$ by using the relations
\be\label{w-to-v}\begin{aligned} v_1 =  \frac{g_1w_1}{n} + \frac{2w^{*}_1 - w_1 - \bar{w}_1}{2n} \\ v^{*}_1 =  \frac{g_1w^{*}_1}{n} - \frac{w_1 - \bar{w}_1}{2n} \\ \bar{v}_1 = \frac{g_1\bar{w}_1}{n} - \frac{2w^{*}_1 - w_1 - \bar{w}_1}{2n},\end{aligned} \ee
and
\be\label{u-to-v} \begin{aligned} u_1 = Q_1 - \frac{(g_1 + n)w_1}{n} - \frac{2w^{*}_1 - w_1 - \bar{w}_1}{2n} \\ u^{*}_1 = Q_1 - \frac{(g_1 + n)w^{*}_1}{n} + \frac{w_1 - \bar{w}_1}{2n} \\ \bar{u}_1 = Q_1 - \frac{(g_1 + n)\bar{w}_1}{n} + \frac{2w^{*}_1 - w_1 - \bar{w}_1}{2n}.\end{aligned} \ee

After applying the change of parameters to \eqref{1} and \eqref{4}, we can solve for $x, x^{\star}, \bar{x}$ and $y, y^{\star}, \bar{y}$ in terms of $w_2, w^{\star}_2, \bar{w}_2$ and $w_1, w^{\star}_1, \bar{w}_1$. We can then rewrite the result in terms of the original Fourier series coefficients for $I_{m1}(t), I_{m2}(t)$. In doing so, we obtain

\be\label{eqx} \begin{aligned} x = \frac{M\pa{g_2(g_2 + n)(w^{(2)}_0 + w^{(2)}_1 + w^{(2)}_{-1}) + (2g_2 + n)(\textbf{i}\cdot w^{(2)}_1 - \textbf{i}\cdot w^{(2)}_{-1}) - w^{(2)}_1 - w^{(2)}_{-1}}}{ac\, \pa{n Q_2 - (g_2 + n)(w^{(2)}_0 + w^{(2)}_1 + w^{(2)}_{-1}) - (\textbf{i}\cdot w^{(1)}_1 - \textbf{i}\cdot w^{(1)}_{-1})}}, \\ x^{\star} = \frac{M\pa{g_2(g_2 + n)(w^{(2)}_0 + (\textbf{i}\cdot w^{(2)}_1 - \textbf{i}\cdot w^{(2)}_{-1})) - (2g_2 + n)(w^{(2)}_{1} + w^{(2)}_{-1}) - (\textbf{i}\cdot w^{(2)}_1 - \textbf{i}\cdot w^{(2)}_{-1})}}{ac\, \pa{n Q_2 - (g_2 + n)(w^{(2)}_0 + (\textbf{i}\cdot w^{(2)}_1 - \textbf{i}\cdot w^{(2)}_{-1})) + w^{(2)}_{1} + w^{(2)}_{-1}}}, \\  \bar{x} = \frac{M\pa{g_2(g_2 + n)(w^{(2)}_0 - w^{(2)}_1 - w^{(2)}_{-1}) - (2g_2 + n)(\textbf{i}\cdot w^{(1)}_1 - \textbf{i}\cdot w^{(1)}_{-1}) + w^{(1)}_1 + w^{(1)}_{-1}}}{ac\, \pa{n Q_2 - (g_2 + n)(w^{(2)}_0 - w^{(2)}_1 - w^{(2)}_{-1}) + (\textbf{i}\cdot w^{(1)}_1 - \textbf{i}\cdot w^{(1)}_{-1})}}, \end{aligned} \ee
and
\be\label{eqy} \begin{aligned} \pa{y + (u + v) x} = \pa{\frac{N + (u + v)M}{ac}}\\\frac{g_1(g_1 + n)(w^{(1)}_0 + w^{(1)}_{-1} + w^{(1)}_1) + (2g_1 + n)(\textbf{i}\cdot w^{(1)}_1 - \textbf{i}\cdot w^{(1)}_{-1}) - ( w^{(1)}_1 + w^{(1)}_{-1}) }{nQ_1 - (g_1 + n)(w^{(1)}_0 + w^{(1)}_{-1} + w^{(1)}_1) - (\textbf{i}\cdot w^{(1)}_1 - \textbf{i}\cdot w^{(1)}_{-1})}, \\
\pa{y^{\star} + u x^{\star}} = \pa{\frac{N + Mu}{ac}}\\\frac{g_1(g_1 + n)(w^{(1)}_0 + \textbf{i}\cdot w^{(1)}_1 - \textbf{i}\cdot w^{(1)}_{-1}) - (2g_1 + n)(w^{(1)}_1 + w^{(1)}_{-1}) - (\textbf{i}\cdot w^{(1)}_1 - \textbf{i}\cdot w^{(1)}_{-1})}{nQ_1 - (g_1 + n)(w^{(1)}_0 + \textbf{i}\cdot w^{(1)}_1 - \textbf{i}\cdot w^{(1)}_{-1}) + w^{(1)}_1 + w^{(1)}_{-1}},\\ 
\pa{\bar{y} + (u - v) \bar{x}} = \pa{\frac{N + (u - v)M}{ac}}\\\frac{g_1(g_1 + n)(w^{(1)}_0 - w^{(1)}_{-1} - w^{(1)}_1) - (2g_1 + n)(\textbf{i}\cdot w^{(1)}_1 - \textbf{i}\cdot w^{(1)}_{-1}) + (w^{(1)}_1 + w^{(1)}_{-1}) }{nQ_1 - (g_1 + n)(w^{(1)}_0 - w^{(1)}_{-1} - w^{(1)}_1) + (\textbf{i}\cdot w^{(1)}_1 - \textbf{i}\cdot w^{(1)}_{-1})}.\end{aligned} \ee

To deal with the remaining two equations in \eqref{eq:humsimplify}, we note that, for sufficiently-small $\abs{v}$ and sufficiently-large $\omega$,

\begin{equation} \begin{aligned}
\lim_{t \to \infty} ab\,\int_0^t\pa{\frac{I_{m1}(\tau)}{N + \pa{u + v \cos\pa{\omega \tau}}M} - \frac{I_{m1}(\tau)}{N + Mu}}\, f(t - \tau) \, d\tau \approx \\ \frac{w^{(1)}_0\,ab\,v\,M}{N + Mu} \, \lim_{t \to \infty} \int_0^t\pa{\frac{ \cos(\omega t)}{N + \pa{u + v \cos\pa{\omega \tau}}M}\, f(t - \tau)} \, d\tau \ll 1
\end{aligned} \end{equation}
and 
\begin{equation} \begin{aligned}
\lim_{t \to \infty} ab\,\int_0^t\pa{\frac{I_{m1}(\tau)\pa{u + v \cos\pa{\omega \tau}}}{N + \pa{u + v \cos\pa{\omega \tau}}M} - \frac{uI_{m1}(\tau)}{N + Mu}}\, f(t - \tau) \, d\tau \approx \\ \frac{w^{(1)}_0\,ab\,v\,N}{N + Mu} \, \lim_{t \to \infty} \int_0^t\pa{\frac{ \cos(\omega t)}{N + \pa{u + v \cos\pa{\omega \tau}}M}\, f(t - \tau)} \, d\tau \ll 1,
\end{aligned} \end{equation}
as $ab$ is small, $\frac{M}{N + Mu} < 1$, and $w^{(1)}_0 < Q_1$ (where $Q_1$ is assumed to be $\ll 1$). 

As such, we can make the zero-order approximation 

\begin{equation}
    \begin{aligned}
     & \lim_{t \to \infty} I_{M}(t) = M \pa{1 - \exp\pa{ab\,\int_{0}^{t}  \pa{\frac{uI_{m1}(\tau) }{N + Mu}  \, + \frac{I_{m2}(\tau)}{M}} f(t - \tau)\, d\tau}}, \\
    & \lim_{t \to \infty} I_{N}(t) =  N \pa{1 - \exp\pa{ab\,\int_{0}^{t}\frac{I_{m1} (t)}{N +  Mu}\, f(t - \tau) \, d \tau}}. \\
    \end{aligned}
\end{equation}
It follows that 

\begin{equation}
\begin{aligned}\label{determiningy}  y_0 + y_{-1} e^{-\textbf{i} \, t} +  y_{1} e^{\textbf{i} \, t} \approx N - N \exp \pa{\frac{ab\,C_0w^{(1)}_0 + ab\,C_1 w^{(1)}_1 e^{\omega\,\textbf{i} \, t} + ab\,C_{-1} w^{(1)}_{-1} e^{-\omega\,\textbf{i} \, t} }{N + Mu}} \\ \approx N - N \exp \pa{\frac{ab\,C_0w_0}{N + Mu}} \pa{1 + \frac{ab\,C_1 w^{(1)}_1 e^{\omega\,\textbf{i} \, t} + ab\,C_{-1} w^{(1)}_{-1} e^{-\omega\,\textbf{i} \, t}}{N + Mu} },\end{aligned}    
\end{equation} 
since by our ansatz $w^{(1)}_1$, $w^{(1)}_{-1} \ll 1$.
Similarly, \begin{equation}\label{determiningx} \begin{aligned} x_0 + x_{-1} e^{-\omega\,\textbf{i} \, t} +  x_{1} e^{\omega\,\textbf{i} \, t} \approx M - M \exp \pa{\frac{uab\,(C_0w^{(1)}_0 + C_1 w^{(1)}_1 e^{\omega\,\textbf{i} \, t} + C_{-1} w^{(1)}_{-1} e^{-\omega\,\textbf{i} \, t}) }{N + Mu}} \\ \times \exp \pa{\frac{ab\,(Cw^{(2)}_0 + C_1 w^{(2)}_1 e^{\omega\,\textbf{i} \, t} + C_{-1} w^{(2)}_{-1} e^{-\omega\,\textbf{i} \, t}) }{M}} \\ \approx M - M \exp \pa{\frac{uab\,C_0w^{(1)}_0}{N + uM}} \exp \pa{\frac{ab\,C_0w^{(2)}_0}{M}} \\ \pa{1 + \frac{ab\,C_1 w^{(2)}_1 e^{\omega\, \textbf{i} \, t} + ab\,C_{-1} w^{(2)}_{-1} e^{-\omega\,\textbf{i} \, t}}{M} + \frac{uab\,(C_1 w_1 e^{\omega\,\textbf{i} \, t} + C_{-1} w_{-1} e^{-\omega\,\textbf{i} \, t})}{N + Mu} }. \end{aligned}
\end{equation}

We note that \eqref{determiningy} and \eqref{determiningx} imply that the amplitude of $I^{*}_M, I^{*}_N$ is a  decreasing function of $\omega$. This can be seen by viewing $C_1, C_{-1}$ as functions of $\omega$, i.e., $C_1 = C_1(\omega), C_{-1} = C_{-1}(\omega)$, and noting that 
 $\frac{d}{d \omega} \abs{C_{-1} + C_1}, \frac{d}{d \omega} \abs{C_{-1} - C_{1}} < 0$. 

From \eqref{determiningy} and \eqref{determiningx}, it follows that 
\be\label{w-to-x} \begin{aligned} x_0 \approx M - M \exp \pa{\frac{uab\,C_0w^{(1)}_0}{N + uM}} \exp \pa{\frac{ab\,C_0w^{(2)}_0}{M}}, \end{aligned} \ee 

\be\label{small1} \begin{aligned}x_1 \approx - M \exp \pa{\frac{uab\,C_0w^{(1)}_0}{N + Mu}} \exp \pa{\frac{ab\,C_0w^{(2)}_0}{M}} \pa{\frac{ab\,C_1 w^{(2)}_1}{M} + \frac{uab\,C_1 w^{(1)}_1}{N + Mu} }, \end{aligned} \ee

and \be\label{small2} \begin{aligned} x_{-1} \approx - M\exp \pa{\frac{uab\,C_0w^{(1)}_0}{N + uM}} \exp \pa{\frac{ab\,C_0w^{(2)}_0}{M}} \pa{\frac{ab\,C_{-1} w^{(2)}_{-1}}{M} + \frac{uab\,C_{-1} w^{(1)}_{-1}}{N + Mu} }, \end{aligned} \ee
while
\be\label{forsub1} \begin{aligned} y_0 \approx N - N \exp \pa{\frac{ab\,C_0w^{(1)}_0}{N + Mu}}, \end{aligned} \ee 

\be\label{31} \begin{aligned}y_1 \approx - N \exp \pa{\frac{ab\,C_0w^{(1)}_0}{N + uM}} \pa{\frac{ab\,C_1 w^{(1)}_1}{N + Mu}} , \end{aligned} \ee
and
\be\label{32} \begin{aligned} y_{-1} \approx - N \exp \pa{\frac{ab\,C_0w^{(1)}_0}{N + uM}} \pa{\frac{ab\,C_{-1} w^{(1)}_{-1}}{N + Mu}}. \end{aligned} \ee

We are now able to approximate the constant terms in each of our Fourier series. From \eqref{eqx} and \eqref{eqy}, we have

\be\label{forsub2} \begin{aligned} x_0 \approx \frac{Mg_2(g_2 + n)w^{(2)}_0 }{ac\pa{Q_2  n - (g_2 + n)w_0^{(2)}}}. \end{aligned} \ee

This implies that \be\label{endpoint}\begin{aligned} 0 \le w^{(2)}_0 \le \frac{a c n Q_2}{\pa{g_2 + ac}\pa{g_2 + n}}.\end{aligned} \ee 

Furthermore, \be\label{forsub3} \begin{aligned} w^{(1)}_0 \approx \frac{N + Mu}{ab\,C_0u} \pa{\frac{ab\,
\abs{C_0}w^{(2)}_0}{M} + \ln\pa{1 - \frac{g_2(g_2 + n)w^{(2)}_0 }{ac\,\pa{Q_2 n - (g_2 + n)w_0^{(2)}}}}}.\end{aligned} \ee

We note here that, for realistic $C_0$ such that $ab\,C_0 > 1$, 
$w_0^{(1)}$ is a monotonically-increasing function of $w_0^{(2)}$.

Moreover,
\be\label{tosub} \begin{aligned} \frac{y_0 + (u + v) x_0}{N + (u + v) M} + \frac{y_0 + (u-v) x_0}{N + (u-v) M} \approx \frac{2g_1\pa{g_1 + n}w^{(1)}_0}{ac\pa{nQ_1 - (g_1 + n)w^{(1)}_0}}.\end{aligned} \ee 

Substituting \eqref{forsub1}, \eqref{forsub2}, and \eqref{forsub3} into \eqref{tosub}, we obtain a single nonlinear equation  for $w^{(2)}_0$, arriving at the equation in the statement of Theorem \ref{thm1}.
For each solution $w^{(2)}_0$ of \eqref{eqforw2}, we can obtain a unique solution for the constant term in every other Fourier series: \eqref{forsub3} yields $w^{(2)}_0$, \eqref{w-to-v} and \eqref{u-to-v} yield the constant terms in the Fourier series of $S^{*}_{m1}, S^{*}_{m2}, E^{*}_{m1}$, and $E^{*}_{m2}$, and \eqref{w-to-x} and \eqref{forsub1} yields $x_0$ and $y_0$. 

Given an approximation for the constant terms of the Fourier series, we now outline an analytical method for \textit{uniquely} approximating the coefficients associated with the \textit{non-constant} terms.

We first make the change of variables \[w^{(2)}_s:= w^{(2)}_1 + w^{(2)}_{-1}, \hspace{5pt} w^{(2)}_d:= \textbf{i} \cdot \pa{w^{(2)}_1 - w^{(2)}_{-1}}\] for simplicity, and consider \eqref{eqx}, which becomes
\be\label{eqxrewritten} \begin{aligned} \frac{g_2(g_2 + n)(w^{(2)}_0 + w^{(2)}_s) + (2g_2 + n)(w^{(2)}_d) - w^{(2)}_s}{ n Q_2 - (g_2 + n)(w^{(2)}_0 + w^{(2)}_s) - (w^{(2)}_d)} = \\  \frac{g_2(g_2 + n)(w^{(2)}_0 + w^{(2)}_d) - (2g_2 + n)(w^{(2)}_s) - (w^{(2)}_d)}{n Q_2 - (g_2 + n)(w^{(2)}_0 + w^{(2)}_d) + w^{(2)}_s} = \\ \frac{g_2(g_2 + n)(w^{(2)}_0 - w^{(2)}_s) - (2g_2 + n)(w^{(2)}_d) + w^{(2)}_s}{ n Q_2 - (g_2 + n)(w^{(2)}_0 - w^{(2)}_s) + (w^{(2)}_d)}. \end{aligned} \ee

Now considering \eqref{small1} and \eqref{small2}, we note that, since \[
\exp \pa{\frac{uab\,C_0w^{(1)}_0}{N + Mu}} \exp \pa{\frac{ab\,C_0w^{(2)}_0}{M}} < 1,
\] and $C_1, C_{-1} \ll 1$ for sufficiently-small $\omega$, \[x_1 \ll w^{(1)}_1, w^{(2)}_1 \ll 1, \hspace{10pt} x_{-1} \ll w^{(1)}_{-1}, w^{(2)}_{-1} \ll 1.\]

Hence, we can approximate $w^{(2)}_s, w^{(2)}_d \in \mathbb{R}$ by a real solution of the system of two quadratic equations following from \eqref{eqxrewritten}. This system can be rewritten as a single quartic for $w^{(2)}_s$, which can be shown to possess two real roots by Descartes' rule of signs, where one is extraneous (a solution $\{w^{(2)}_{se}, w^{(2)}_{de}\}$ satisfying $n Q_2 - (g_2 + n) (w^{(2)}_0 + w^{(2)}_{de}) + w^{(2)}_{se} = 0$).

Once we have obtained $w^{(2)}_1, w^{(2)}_{-1}$, we can determine  $w^{(1)}_{1}, w^{(1)}_{-1}$ similarly, using the two quadratic equations formed by substituting \eqref{31} and \eqref{32} into \eqref{eqy} and the previously-determined values for $w^{(1)}_0$ and $y_0$. From here, the remaining coefficients follow. 

\end{proof}
The derivation preceding \eqref{endpoint} in Section \ref{proofthm1} indicates that all (biologically-realistic) solutions for $w^{(2)}_0$ are in the range $0 < w^{(2)}_0 < \frac{a c n Q_2}{\pa{g_2 + ac}\pa{g_2 + n}}$. We now turn to analyzing when \eqref{eqforw2} has \textit{endemic} solutions in the interval above. 

\section*{Appendix B: Necessary and Sufficient Conditions for \eqref{eqforw2} To Have No Endemic Solutions}
We prove the following result: \begin{thm}\label{thm2}
For sufficiently-small $\abs{v}$ and sufficiently-large $\omega$, a necessary condition for \eqref{eqforw2} to have no nonzero solutions is
  \be\begin{aligned}\frac{Mg_2(g_2 + n)}{ a^2 b c n \abs{C_0} Q_2} > 1.\end{aligned} \ee
A stronger, sufficient condition is
 \be\begin{aligned} \frac{Mg_2(g_2 + n)}{Q_2}\pa{\frac{1}{a^2bcn\abs{C_0}} - \frac{Q_1Mu}{{2g_1\pa{g_1 + n}\pa{N+Mu}}}\pa{\frac{u+v}{N + (u + v)M} + \frac{u-v}{N + (u - v)M}}} > 1. \end{aligned} \ee \end{thm}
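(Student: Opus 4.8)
The plan is to treat \eqref{eqforw2} as a single scalar root-finding problem. Writing $w := w_0^{(2)}$ and $\Phi(w) := F_1(w,A(w)) + F_2(w,A(w)) - H(A(w))$, the statement ``no endemic solution'' becomes ``$\Phi$ has no zero on the admissible interval $(0,w_{\max})$,'' where $w_{\max} = \frac{acnQ_2}{\pa{g_2+ac}\pa{g_2+n}}$ by \eqref{endpoint} and $\Phi(0)=0$ is the DFE. Two structural features of $\Phi$ would drive the whole argument: the map $A(\cdot)=w_0^{(1)}$ of \eqref{forsub3}, which satisfies $A(0)=0$ and $A(w)\to+\infty$ as $w\uparrow w_{\max}$ (the logarithm in \eqref{forsub3} diverges), together with the pole of $H$ at $A=\frac{nQ_1}{g_1+n}$, attained at some $w_1<w_{\max}$, where $H\to+\infty$ while $F_1+F_2$ stays bounded above by $2$; hence $\Phi\to-\infty$ as $w\uparrow w_1$. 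I set $\rho := \frac{Mg_2(g_2+n)}{a^2bcn\abs{C_0}Q_2}$, so the necessary condition \eqref{condition1} reads $\rho>1$.

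For the necessary condition I would argue by contraposition, showing $\rho\le 1$ forces an endemic root. Differentiating \eqref{forsub3} at the origin, using $x_0=MP(w)$ with $P(w)=\frac{g_2(g_2+n)w}{ac\pa{Q_2n-(g_2+n)w}}$ from \eqref{forsub2}, gives $A'(0)=\frac{N+Mu}{Mu}\pa{\rho-1}$, so $\mathrm{sign}\,A'(0)=\mathrm{sign}(\rho-1)$ (this is robust to the sign convention on $C_0$). When $\rho\le 1$ we have $A'(0)\le 0$, so $A$ is nonpositive just to the right of $0$ yet tends to $+\infty$; by continuity $A$ returns to $0$ at some $w^{\dagger}\in(0,w_1)$. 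At $w^{\dagger}$ the village response vanishes, $H(A(w^{\dagger}))=0$, whereas $F_1+F_2=\frac{(u+v)x_0}{N+(u+v)M}+\frac{(u-v)x_0}{N+(u-v)M}>0$ because $x_0=MP(w^{\dagger})>0$; thus $\Phi(w^{\dagger})>0$. Since $\Phi\to-\infty$ at $w_1$, the intermediate value theorem yields a zero of $\Phi$ in $(w^{\dagger},w_1)$, an endemic solution. The borderline $\rho=1$ is handled by noting $\Phi'(0)>0$ there (the $A'(0)$-proportional terms vanish, leaving a strictly positive contribution from $P'(0)$). Contraposition gives \eqref{condition1}.

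For the sufficient condition \eqref{condition2} I would instead bound $\Phi$ globally on $(0,w_1)$ under the standing hypothesis $\rho>1$ (so $A\ge 0$ throughout). The idea is to replace the two concave, saturating human-prevalence terms $N-N\exp(\cdot)$ in $F_1,F_2$ by their first-order tangents at $A=0$, and to lower-bound the village response by discarding its pole denominator, $H(A)\ge \frac{2g_1(g_1+n)}{acnQ_1}A$. Substituting these, together with $x_0=MP(w)$, into $\Phi$ and then eliminating the intermediate quantities through the monotone balance relations \eqref{forsub2}, \eqref{forsub3} and \eqref{tosub} collapses $\Phi<0$ to a single scalar inequality in $w$ holding for all admissible $w$; this reduces to the stated condition, the subtracted term $\frac{Q_1Mu}{2g_1(g_1+n)(N+Mu)}\pa{\frac{u+v}{N+(u+v)M}+\frac{u-v}{N+(u-v)M}}$ arising precisely from the closed village-feedback loop $x_0\to A\to x_0$ carried by the factor $\frac{2g_1(g_1+n)}{acnQ_1}$.

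The main obstacle is the sufficient direction, for two linked reasons. First, the exact linearized threshold $\Phi'(0)=0$ depends on all parameters (including $g_1,Q_1$ through $H'(0)$) and is not clean; the conditions \eqref{condition1} and \eqref{condition2} are in effect a lower and an upper bound on this true threshold, which is why a gap between them is expected. Second, because of the village-feedback coupling the bifurcation at the DFE may be backward, so local information near $w=0$ cannot on its own exclude endemic solutions sitting away from the DFE; the bounds must therefore be enforced uniformly on all of $(0,w_1)$, right up to the pole of $H$. What makes this tractable is the concavity (saturation) of the transmission maps, which guarantees the tangent majorants are valid globally and lets the closed-form sufficient condition emerge; checking that the tangent bound and the pole bound are simultaneously tight enough to close the inequality is the delicate step.
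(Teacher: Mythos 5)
Your necessary-condition argument is essentially the paper's own: both hinge on the identity $A'(0)=\frac{N+Mu}{Mu}\pa{\rho-1}$ with $\rho=\frac{Mg_2(g_2+n)}{a^2bcn\abs{C_0}Q_2}$, followed by an intermediate-value argument that forces an endemic root once $A$ dips below zero. The paper locates that root by intersecting the positive $x_0$-terms with the remaining terms near the right endpoint of the admissible interval and invokes $A''>0$; you instead use the pole of $H$ at $A=nQ_1/(g_1+n)$. The two devices are interchangeable, and your treatment of the borderline case $\rho=1$ via $\Phi'(0)>0$ is fine. This half of the proposal is sound.

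The sufficient-condition half has a genuine gap. You correctly name the two bounding moves the paper makes --- $H(A)\ge\frac{2g_1(g_1+n)}{acnQ_1}A$ for $A\ge0$, and majorizing the saturating terms $N-N\exp(\cdot)$ --- but the step that actually produces \eqref{condition2} is absent. After those bounds one still needs
\[
MP(w)\pa{\frac{u+v}{N+(u+v)M}+\frac{u-v}{N+(u-v)M}} < \frac{2g_1(g_1+n)}{acnQ_1}\,A(w),
\qquad P(w):=\frac{g_2(g_2+n)w}{ac\pa{Q_2n-(g_2+n)w}},
\]
to hold for \emph{all} admissible $w$; both sides vanish at $w=0$, so this is not yet a parameter condition. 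The paper closes it by differentiating both sides, clearing denominators, and checking that the resulting expression is monotonically increasing in $w$, so that the binding case is $w=0$ --- and it is precisely that evaluation which yields \eqref{condition2}. Your proposal asserts that the balance relations \eqref{forsub2}, \eqref{forsub3}, \eqref{tosub} ``collapse $\Phi<0$ to a single scalar inequality \ldots [which] reduces to the stated condition,'' but supplies no mechanism for the collapse; moreover, if you retain the tangent majorants $N-Ne^{x}\le -Nx$ rather than the cruder bound $N-Ne^{x}\le 0$ that the paper effectively uses, the computation produces a different (sharper) sufficient condition, so the claimed reduction to \eqref{condition2} cannot go through exactly as written. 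You flag this as ``the delicate step''; it is, and it is the step that constitutes the proof.
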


\begin{proof}[Proof of Theorem \ref{thm2}] We begin with the proof of the sufficient condition. 
Since 
\be \begin{aligned} & F(W) := \frac{N - N \exp \pa{\frac{ab\,\abs{C_0}A(W)}{N + Mu}} + \frac{M(u + v)g_2(g_2 + n)W }{ ac\,\pa{Q_2  n - (g_2 + n)W}}}{N + (u + v)M} \ + \\ & \frac{N - N \exp \pa{\frac{ab\,\abs{C_0}A(W)}{N + Mu}} + \frac{M(u - v)g_2(g_2 + n)W }{ac\,\pa{Q_2  n - (g_2 + n)W}}}{N - (u + v)M} -   \frac{2g_1 A(W)\pa{g_1 + n}}{acnQ_1} > 0\end{aligned} \ee 
when $W$ is a non-negative solution of \eqref{eqforw2}, two sufficient conditions for \eqref{eqforw2} to have no nontrivial roots are 

\be \begin{aligned} & \frac{N - N \exp \pa{\frac{ab\,C_0A(W)}{N + Mu}}}{N + (u + v)M} \ + \frac{N - N \exp \pa{\frac{ab\,\abs{C_0}A(W)}{N + Mu}} }{N + (u - v)M} < 0; \\ & \frac{M(u + v)g_2(g_2 + n)W}{ac(N + (u + v)M)\pa{Q_2  n - (g_2 + n)W}}\ + \\ & \frac{M(u - v)g_2(g_2 + n)W }{ac\pa{N - (u + v)M}\pa{Q_2 n - (g_2 + n)W}} - \frac{2g_1 A(W)\pa{g_1 + n}}{acnQ_1} < 0 \end{aligned} \ee

for all $W \in (0, \frac{a c n Q_2}{\pa{g_2 + ac}\pa{g_2 + n}})$.

Since $A(0) = 0$, the first condition holds if $A'(W) > 0$. The second condition holds if 

\be \begin{aligned} 
\frac{2g_1(g_1 + n)}{a c n Q_1}A'(W) > \pa{\frac{u+v}{N + (u+v)M} + \frac{u-v}{N + (u-v)M}}\\ \pa{\frac{Mg_2(g_2 + n)}{ac(Q_2n - (g_2 + n)W)} + \frac{Mg_2(g_2 + n)^2W}{ac(Q_2n - (g_2 + n)W)^2}},
\end{aligned} \ee
meaning that \be \begin{aligned} & \frac{2g_1(g_1 + n)}{acnQ_1}\frac{N + Mu}{ab\,C_0 u}\pa{\frac{ab\,\abs{C_0}}{M} - \frac{g_2\pa{g_2 + n} \pa{Q_2  n - (g_2 + n)W} + g_2(g_2 + n)^2W}{\pa{Q_2  n - (g_2 + n)W} \pa{a c n Q_2 - W\pa{g_2 + ac}\pa{g_2 + n}}}}  > \\ & \pa{\frac{M(u + v)}{(N + (u + v)M)}\ + \frac{M(u - v)}{N - (u + v)M}} \pa{\frac{g_2(g_2 + n)}{ac(nQ_2 - (g_2 + n)W)}}\pa{1 + \frac{(g_2 + n)W}{nQ_2 - (g_2 +n)W}}.
\end{aligned} \ee

This brings us to the relation
\be \begin{aligned}  & -nQ_2g_2(g_2 + n)\pa{\frac{M(u + v)}{(N + (u + v)M)} + \frac{M(u - v)}{N - (u + v)M}}   + 
 \frac{2g_1(g_1 + n)}{nQ_1}\frac{\pa{N + Mu}}{ab\,C_0 u}\\ & \pa{\frac{ab\,\abs{C_0}(nQ_2 - (g_2 + n)W)^2}{M} - \frac{g_2\pa{g_2 + n} \pa{Q_2  n - (g_2 + n)W} + g_2(g_2 + n)^2W}{ac}} > 0.
 \end{aligned} \ee

Since the left hand side is a monotonically-increasing function of $W$, we obtain the sufficient condition
\be \begin{aligned}  & -nQ_2g_2(g_2 + n)\pa{\frac{M(u + v)}{(N + (u + v)M)} + \frac{M(u - v)}{N - (u + v)M}}   + \\
& \frac{2g_1(g_1 + n)}{nQ_1}\frac{\pa{N + Mu}}{ab\, C_0 u}\pa{\frac{ab\, \abs{C_0}(nQ_2)^2}{M} - \frac{g_2\pa{g_2 + n} \pa{Q_2  n}}{ac}} > 0,
 \end{aligned} \ee
from where condition \eqref{condition2} in Theorem \ref{thm2} follows. 

We now address the necessary condition. If \be \begin{aligned}  F(W):= -\frac{N - N \exp \pa{\frac{ab\,\abs{C_0}A(W)}{N + Mu}}}{N + (u + v)M}  - \frac{N - N \exp \pa{\frac{ab\,\abs{C_0}A(W)}{N + Mu}}}{N + (u - v)M} + \\ \frac{2g_1A(W)\pa{g_1 + n}}{ac\pa{nQ_1 - (g_1 + n)A(W})} < 0 \end{aligned} \ee for some $0 < W < \frac{a c n Q_2}{\pa{g_2 + ac}\pa{g_2 + n}}$, then, since \be \begin{aligned} G(W):= \frac{M(u + v)g_2(g_2 + n)W}{ac\pa{N + (u + v)M}\pa{Q_2  n - (g_2 + n)W}}  + \\ \frac{M(u - v)g_2(g_2 + n)W}{ac\pa{N + (u - v)M}\pa{Q_2  n - (g_2 + n)W}} > 0 \end{aligned} \ee for all $0 < W \le \frac{a c n Q_2}{\pa{g_2 + ac}\pa{g_2 + n}}$ and $F(W) > G(W)$ for $W$ such that \[ 0<\frac{a c n Q_2}{\pa{g_2 + ac}\pa{g_2 + n}} - W << 1,\] $F(W)$ and $G(W)$ must intersect at a point $0 < W < \frac{a c n Q_2}{\pa{g_2 + ac}\pa{g_2 + n}}$. 

As a result, if $A(W) < 0$ for some $W \in (0, \frac{a c n Q_2}{\pa{g_2 + ac}\pa{g_2 + n}})$, \eqref{eqforw2} must possess a positive, biologically-realistic solution for $w^{(2)}_0$. Given that $A''(W) > 0$ for all $W$ in the interval in question, a necessary condition for \eqref{eqforw2} to have no solutions in the interval $(0, \frac{a c n Q_2}{\pa{g_2 + ac}\pa{g_2 + n}})$ is the condition $A'(W) > 0$, from where \eqref{condition1} follows. \end{proof}

\section*{Appendix C: $R_0$ in the case of an isolated Patch $2$ population of density $M$}\label{proofthm3}
In this case, the system \eqref{eq:humsimplify} becomes:
\be\label{neweqhum} \begin{aligned}
    &\frac{d S_{m2}}{dt} = g_{2}Q_2 - \frac{ ac\,I_M}{M} S_{m2} - g_{2} S_{m2}; \\
    & \frac{d E_{m2}}{dt} = \frac{ ac\,I_M}{M} S_{m2} - (g_{2} + n)E_{m2}; \\
    & \frac{d I_{m2}}{dt} = n E_{m2} - g_{2} I_{m2};\\
    & I_{M} = M  - M \exp\pa{\int_{0}^{t}  \pa{\frac{ab\, I_{m2}(\tau)}{M}} f(t - \tau)\, d\tau}; 
    \end{aligned} \ee
with initial conditions \[S_{m2}(0) = S^{(0)}_{m2}; \ E_{m2}(0) = E^{(0)}_{m2};  \ I_{m2}(0) = I^{(0)}_{m2}; \ I_M(0) = 0.\] 

Letting $I_{m2, t} = I_{m2}(t - \phi)$, $\phi \in [-t, 0]$ (such that $I_{m2, t}$ is the \textit{history} of $I_{m2}$) we find that
\[\begin{pmatrix} \frac{d}{dt}  S_{m2} \\ \frac{d}{dt}  E_{m2} \\ \frac{d}{dt} I_{m2} \end{pmatrix} = \textbf{J}(S_{m2}(t), E_{m2}(t), I_{m2, t}), \]
where \[\textbf{J}(S_{m2}, E_{m2}, I_{m2, t}) = \begin{pmatrix} g_{2}Q_2 - acS_{m2} + ac \exp\pa{\int_{0}^{t}  \pa{\frac{ab\, I_{m2}(\tau)}{M}} f(t - \tau)\, d\tau} S_{m2} - g_{2} S_{m2} \\ acS_{m2} - ac \exp\pa{\int_{0}^{t}  \pa{\frac{ab\, I_{m2}(\tau)}{M}} f(t - \tau)\, d\tau} S_{m2} - (g_{2} + n)E_{m2} \\ n E_{m2} - g_{2} I_{m2} \end{pmatrix}.\]

The Fréchet derivative of $\textbf{J}$ evaluated at the DFE is 
\be \begin{aligned} DQ(Q_2, 0, 0)\begin{pmatrix}S \\ E \\ I\end{pmatrix} = \begin{pmatrix} -g_2S + acQ_2\,\pa{\int_{0}^{t}  \pa{\frac{ab}{M}} f(t - \tau)\, d\tau}I  \\ -\pa{g_2 + n}E - acQ_2\,\pa{\int_{0}^{t}  \pa{\frac{ab}{M}} f(t - \tau)\, d\tau}I \\ nE - g_2I\end{pmatrix}.\end{aligned} \ee By Theorem 4.7 in Diekmann and Gyllenberg \cite{diekmann2012equations}, the characteristic equation of \eqref{neweqhum} is equivalent to 
\[\det \begin{pmatrix} -g_2 -\lambda & 0 & acQ_2 \frac{abC_0}{M} \\ 0 & -g_2 - n - \lambda & -acQ_2 \frac{abC_0}{M} \\ 0 & n & -g_2-\lambda \end{pmatrix} = 0,\] where $C_0:= \lim_{t \to \infty} \int_0^t F(\tau) d\tau$ is defined in \eqref{C0}.
The DFE is locally exponentially stable (which corresponds to the $R_0 < 1$ case) if the characteristic equation has all negative roots, and unstable otherwise (which corresponds to the $R_0 > 1$ case). Simplifying the left hand side and using $C_0 < 0$, we obtain \[\pa{-g_2 - \lambda}\pa{(-g_2 - n - \lambda)(-g_2 - \lambda) - nacQ_2 \frac{ab\abs{C_0}}{M}} = 0,\] which yields that the characteristic equation has all negative roots if and only if 
\[\frac{a^2bcn\abs{C_0}Q_2}{Mg_2\pa{g_2 + n}} < 1.\] It follows that \[R_0 = \frac{a^2bcn\abs{C_0}Q_2}{Mg_2\pa{g_2 + n}} = \underbrace{\pa{ab}}_{\substack{\text{\tiny infectious} \\ \text{\tiny bite rate per} \\ \text{\tiny infectious}\\ \text{\tiny mosquito}}} \cdot \underbrace{\frac{Q_2}{M}}_{\substack{\text{\tiny mosquito} \\ \text{\tiny to human} \\ \text{\tiny ratio}}} \cdot \underbrace{\abs{C_0}}_{\substack{\text{\tiny expected} \\ \text{\tiny length of} \\ \text{\tiny infection}\\ \text{\tiny after 1 bite}}} \cdot \underbrace{\pa{ac}}_{\substack{\text{\tiny infecting} \\ \text{\tiny bite rate per} \\ \text{\tiny susceptible}\\ \text{\tiny mosquito}}} \cdot \underbrace{\frac{1}{g_2 + n}}_{\substack{\text{\tiny expected} \\ \text{\tiny time mosquito} \\ \text{\tiny exposed}}} \cdot \underbrace{\frac{n}{g_2}}_{\substack{\text{\tiny sporogyny to} \\ \text{\tiny death rate} \\ \text{\tiny ratio}}} \]
We note that the same result could be obtained by retracing Appendices A and B with $u, v$ both set to zero (in this case, the lower and upper bounds on $R_0$ in \eqref{R0bound} align).
\section*{Funding}
 J.A. Flegg’s research is supported by the Australian Research Council (DP200100747, FT210100034) and the National Health and Medical Research Council (APP2019093). 
\section*{Data availability}
Data sharing not applicable to this article as no datasets were generated or analysed during the
current study.
\appendix

\FloatBarrier
\bibliographystyle{plain}
\bibliography{bibliography}

\end{document}